\journal{Control Engineering Practice}
\let\textquotedbl="
\newtheorem{thm}{Theorem}
\newtheorem{rem}{Remark}
\newtheorem{lem}{Lemma}
\newtheorem{defn}{Definition}
\newtheorem{cor}{Corollary}
\begin{document}

\begin{frontmatter}

\title{Discrete-time Flatness-based Control\\ of a Gantry Crane\tnoteref{mytitlenote}}
\tnotetext[mytitlenote]{This work has been supported by the Austrian Science Fund (FWF) under grant number P 32151.}

\author{Johannes Diwold\fnref{mymainaddress}}
\ead{johannes.diwold@jku.at}
\address[mymainaddress]{Institute of Automatic Control and Control Systems Technology,	Johannes Kepler University Linz, Altenbergerstraße 66, 4040 Linz, Austria}
\author{Bernd Kolar\fnref{berndaddress}}
\ead{bernd\_kolar@ifac-mail.org}
\address[berndaddress]{Magna Powertrain Engineering Center Steyr GmbH \& Co KG, Steyrer	Str. 32, 4300 St. Valentin, Austria}
\author{Markus Schöberl\fnref{mymainaddress}}
\ead{markus.schoeberl@jku.at}

\begin{abstract}
This article addresses the design of a discrete-time flatness-based tracking control for a gantry crane and demonstrates the practical applicability of the approach by measurement results. The required sampled-data model is derived by an Euler-discretization with a prior state transformation in such a way that the flatness of the continuous-time system is preserved. Like in the continuous-time case, the flatness-based controller design is performed in two steps. First, the sampled-data system is exactly linearized by a discrete-time quasi-static state feedback. Subsequently, a further feedback enforces a stable linear tracking error dynamics. To underline its practical relevance, the performance of the novel discrete-time tracking control is compared to the classical continuous-time approach by measurement results from a laboratory setup. In particular, it turns out that the discrete-time controller is significantly more robust with respect to large sampling times. Moreover, it is shown how the discrete-time approach facilitates the design of optimal reference trajectories, and further measurement results are presented.
\end{abstract}

\begin{keyword}
Discrete-time flatness, discretization, nonlinear control, dynamic feedback linearization, quasi-static feedback linearization, optimization
\end{keyword}

\end{frontmatter}

\thispagestyle{empty}


\section{Introduction}

The concept of flatness has been introduced by Fliess, L\'{e}vine,
Martin and Rouchon for nonlinear continuous-time systems in the 1990s
(see e.g. \cite{FliessLevineMartinRouchon:1992}, \cite{FliessLevineMartinRouchon:1995}
and \cite{FliessLevineMartinRouchon:1999}). The popularity of flatness
is due to the fact that it allows an elegant solution to motion planning
problems as well as a systematic design of tracking controllers. For
the practical implementation of a flatness-based tracking control
on a digital computer/processor, it is of course important to evaluate
the continuous-time control law with a sufficiently high sampling
rate. If, however, the processing unit or actuators/sensors are limited
to lower sampling rates, a discrete evaluation of a continuous-time
control law may lead to unsatisfactory results. As known from linear
systems theory, an appropriate alternative is to design the controller
directly for a suitable discretization of the system. Moreover, also
other nonlinear control concepts have already been transferred successfully
to the discrete-time framework, see e.g. \cite{KotyckaThoma2021}
for a recent contribution addressing energy-based methods. Motivated
by these considerations, in the present paper we utilize the notion
of discrete-time flatness (\cite{KaldmaeKotta:2013}, \cite{KolarKaldmaeSchoberlKottaSchlacher:2016},
\cite{GuillotMillerioux2020}, \cite{DiwoldKolarSchoeberl2022}) and
prove its practical applicability by designing discrete-time tracking
controllers for the sampled-data model of a gantry crane.

So far, discrete-time flatness-based controllers are only rarely used
in practical applications (except for linear systems, see e.g. \cite{Sira-RamirezAgrawal:2004}).
A main obstacle for a broader usage of the discrete-time approach
is the fact that the flatness of a continuous-time system is not necessarily
preserved by an exact or approximate discretization. Thus, we present
a simple strategy which combines a suitable state transformation and
a subsequent Euler-discretization in such a way that the flatness
of the system is preserved. Although only constructive, the method
is applicable for many well-known nonlinear systems like the gantry
crane, the VTOL aircraft or the induction motor, to mention just a
few. Once a flat sampled-data model has been derived, the design of
tracking controllers can be performed in a systematic way since also
discrete-time flat systems can be exactly linearized by a dynamic
state feedback, see e.g. \cite{DiwoldKolarSchoeberl2022}. For the
considered sampled-data model of the gantry crane, we show that even
an exact linearization by a quasi-static state feedback -- as it
is proposed e.g. in \cite{DelaleauRudolph:1998} for continuous-time
systems -- is possible. With the exactly linearized system, the design
of a tracking control for the stabilization of reference trajectories
becomes a straightforward task. An implementation of both the dynamic
and the quasi-static controller on a laboratory setup has shown that
especially the novel discrete-time quasi-static controller is quite
robust with respect to low sampling rates. Another advantage of the
discrete-time approach becomes apparent when reference trajectories
shall be determined by optimization. In the continuous-time case,
this leads to infinite-dimensional optimization problems, which are
typically reduced to finite-dimensional ones by a discretization of
the space of reference trajectories (using e.g. spline functions).
In the discrete-time case, in contrast, the resulting optimization
problems are inherently finite-dimensional (as long as finite time-intervals
are considered). To illustrate this beneficial property, we also formulate
and solve such an optimization problem for the considered sampled-data
model of the gantry crane.

The paper is organized as follows: In Section \ref{sec:Flatness-and-Discretization}
we recapitulate the concept of continuous-time as well as discrete-time
flatness. We present a constructive method to derive a flat discretization
and illustrate it with the gantry crane. Next, in Section \ref{sec:Discrete-time-Flatness-based-Tra},
we use the sampled-data model to design tracking controllers based
on a linearization by a dynamic and a quasi-static state feedback.
Furthermore, we show how the controllers can be extended by integral
parts to compensate for stationary errors caused by friction effects.
In Section \ref{sec:Trajectory-Planning-and}, we present measurement
results and compare the novel discrete-time quasi-static controller
with the continuous-time controller derived in \cite{KolarRamsSchlacher:2017}
for different sampling rates. Moreover, we calculate an optimal trajectory
which achieves a transition between two given rest positions while
minimizing the acceleration of the load. Finally, we present measurement
results for the optimized trajectory as well as for the tracking of
a closed path in the shape of a lying eight.

\section{Flatness and Discretization\label{sec:Flatness-and-Discretization}}

In this section, first we briefly recall the definition of flatness
for continuous-time systems as well as a recent definition of (forward-)flatness
for discrete-time systems, see e.g. \cite{DiwoldKolarSchoeberl2022},
\cite{KolarSchoberlDiwold:2019}, or \cite{KolarKaldmaeSchoberlKottaSchlacher:2016}.
Subsequently, we show that for flat continuous-time systems which
can be transformed into a structurally flat triangular representation,
it is always possible to derive a forward-flat discretization that
can be used for the flatness-based controller design. We illustrate
this approach with the practical example of the gantry crane and derive
a forward-flat sampled-data model. This discrete-time system serves
as a basis for the controller design discussed in the remainder of
the paper.

\subsection{Flatness of continuous-time and discrete-time systems}

A continuous-time system in state representation
\begin{equation}
\dot{x}=f(x,u)\label{eq:cont_sys_equation}
\end{equation}
with $\dim(x)=n$ and $\dim(u)=m$ is flat, if there exists an $m$-tuple
$y=\varphi(x,u,\dot{u},\dots,u^{(q)})$ and a multi-index $r=(r_{1},\dots,r_{m})$
such that locally
\begin{align}
\begin{aligned}x & =F_{x}(y,\dot{y},\dots,y^{(r-1)})\\
u & =F_{u}(y,\dot{y},\dots,y^{(r)})\,,
\end{aligned}
\label{eq:cont_parameterizing_map_gen}
\end{align}
i.e., $x$ and $u$ can be expressed by $y$ and its time derivatives.
The $m$-tuple $y=\varphi(x,u,\dot{u},\dots,u^{(q)})$ is a flat output.
The map $F=(F_{x},F_{u})$ is necessarily a submersion and is denoted
in the following as parameterizing map. From this map it can be observed
that the trajectories $(x(t),u(t))$ of \eqref{eq:cont_sys_equation}
can be parameterized by sufficiently often differentiable trajectories
of the flat output $y(t)$. More precisely, there exists a one-to-one
correspondence between trajectories of \eqref{eq:cont_sys_equation}
and (sufficiently differentiable) arbitrary trajectories $y(t)$ of
a trivial system.

This idea can be transferred to the discrete-time framework, where
time derivatives have to be replaced by shifts. In \cite{DiwoldKolarSchoeberl2022},
we have shown that in general both forward- and backward-shifts can
be taken into account. In the following, however, we restrict ourselves
to the special case of forward-flat systems, since it is sufficient
for the practical application considered in this paper. We call a
discrete-time system in state representation
\begin{equation}
x^{+}=f(x,u)\label{eq:disc_sys_equation}
\end{equation}
with $\dim(x)=n$, $\dim(u)=m$ and $\textrm{rank}(\partial_{(x,u)}f)=n$
forward-flat, if there exists an $m$-tuple $y=\varphi(x,u,u_{[1]},\dots,u_{[q]})$
and a multi-index $r=(r_{1},\dots,r_{m})$ such that locally
\begin{align}
\begin{aligned}x & =F_{x}(y,y_{[1]},\dots,y_{[r-1]})\\
u & =F_{u}(y,y_{[1]},\dots,y_{[r]})\,,
\end{aligned}
\label{eq:discrete_parameterizing_map_general}
\end{align}
i.e., $x$ and $u$ can be expressed by $y$ and its forward-shifts.
Again, the map $F=(F_{x},F_{u})$ is necessarily a submersion and
allows to parameterize the trajectories $(x(k),u(k))$ of \eqref{eq:disc_sys_equation}
by arbitrary trajectories $y(k)$ of the flat output. In contrast
to the continuous-time case, the sequence $y(k)$ does not need to
be differentiable.
\begin{rem}
	Both in the continuous-time and the discrete-time case, checking whether
	a system is flat or not is a challenging task. The reason is that
	there exist no systematically verifiable necessary and sufficient
	conditions. For forward-flatness of discrete-time systems, however,
	verifiable conditions have been derived in \cite{KolarSchoberlDiwold:2019}.
	These conditions even lead to the computationally efficient test for
	forward-flatness presented in \cite{KolarDiwoldSchoberl:2019}, which
	is a generalization of the test for static feedback linearizability
	(see e.g. \cite{NijmeijervanderSchaft:1990}).
\end{rem}

\subsection{\label{subsec:Discretization-of-flat}Discretization of flat continuous-time
	systems}

In order to design a discrete-time flatness-based controller, a sampled-data
model of the plant has to be derived. To capture the effect that in
digital control circuits the control input is constant during each
sampling interval, preferably an exact discretization is used. However,
for most nonlinear systems determining an exact discretization is
rather difficult. Furthermore, in general it does not preserve the
flatness of a system, see e.g. \cite{DiwoldKolarSchoeberl2022}. From
a practical point of view, using the explicit\footnote{We use the supplement \textquotedbl explicit\textquotedbl{} to distinguish
	it from the implicit Euler-discretization. Since within this contribution
	we consider only explicit Euler-discretizations, we omit the supplement
	in the following.} Euler-discretization
\begin{equation}
x^{+}=x+T_{s}f(x,u)\label{eq:euler_disk_equations}
\end{equation}
is much more convenient. However, even the Euler-discretization does
not necessarily preserve flatness. Hence, in the following we show
that it may be useful to perform a suitable state transformation before
the Euler-discretization \eqref{eq:euler_disk_equations}. In other
words, we exploit the fact that state transformations and the Euler-discretization
do not commute in general. Input transformations and the Euler-discretization,
however, do commute, as the following diagram shows:
\[
\begin{array}{ccc}
\dot{x}=f(x,u) & \rightarrow & \begin{aligned}x^{+}=x+T_{s}f(x,u)\end{aligned}
\\
\downarrow &  & \downarrow\\
\dot{x}=f(x,\Phi_{u}(x,\bar{u})) & \rightarrow & \begin{aligned}x^{+}=x+T_{s}f(x,\Phi_{u}(x,\bar{u}))\end{aligned}
\end{array}
\]
Hence, the Euler-discretizations of continuous-time systems that are
related by an input transformation are again related by the same input
transformation. Since we make use of this fact in the sequel, we state
the following lemma.
\begin{lem}
	\label{lem:Input_transf_and_euler_disk_commute}Input transformations
	$u=\Phi_{u}(x,\bar{u})$ and the Euler-discretization \eqref{eq:euler_disk_equations}
	commute.
\end{lem}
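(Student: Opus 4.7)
The plan is essentially a direct substitution argument, since both operations act on disjoint pieces of the system equation: the Euler-discretization only rewrites the left-hand side $\dot{x}$ as the difference quotient $(x^+ - x)/T_s$, while the input transformation $u = \Phi_u(x,\bar{u})$ only rewrites the argument $u$ on the right-hand side. Because neither operation interferes with the structure the other modifies, commutativity should follow by just chasing the diagram.

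Concretely, I would take the upper-left system $\dot{x}=f(x,u)$ and carry out both paths explicitly. Along the top-then-right path, the Euler-discretization first yields
\begin{equation*}
x^{+}=x+T_{s}f(x,u),
\end{equation*}
and a subsequent substitution $u = \Phi_u(x,\bar{u})$ produces $x^{+}=x+T_{s}f(x,\Phi_{u}(x,\bar{u}))$. Along the left-then-bottom path, I would first substitute $u=\Phi_{u}(x,\bar{u})$ into the continuous-time equation, obtaining $\dot{x}=f(x,\Phi_{u}(x,\bar{u}))$, and then apply the Euler scheme \eqref{eq:euler_disk_equations} to this system, which again gives $x^{+}=x+T_{s}f(x,\Phi_{u}(x,\bar{u}))$. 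The two resulting discrete-time systems coincide, which is exactly the asserted commutativity.

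There is no real obstacle to overcome here; the statement is essentially a tautology once one observes that the Euler operator $f\mapsto \mathrm{id}+T_{s}f$ treats its argument as a symbolic expression in $(x,u)$ and therefore is invariant under any purely algebraic substitution in the variable $u$. The only point worth remarking on in the write-up is \emph{why} the analogous claim fails for state transformations: a state transformation $x=\Phi_{x}(\bar{x})$ changes also the left-hand side via $\dot{x}=\partial_{\bar{x}}\Phi_{x}(\bar{x})\,\dot{\bar{x}}$, so the Jacobian factor interacts non-trivially with the discretization step, whereas an input transformation leaves the $\dot{x}$ side entirely untouched. This contrast motivates the very use the authors will make of the lemma in the next subsection.
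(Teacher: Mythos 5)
Your proof is correct and coincides with the paper's own justification, which is precisely the displayed commutative diagram: both paths yield $x^{+}=x+T_{s}f(x,\Phi_{u}(x,\bar{u}))$ by direct substitution. Your added observation about why state transformations fail to commute (the Jacobian factor $\partial_{\bar{x}}\Phi_{x}$ entering through $\dot{x}$) is a useful complement but not part of the lemma itself.
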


\begin{rem}
	\label{rem:input_transf_disk_do_not_commute}For an exact discretization
	it is just vice versa: state transformations and exact discretization
	commute, whereas input transformations and exact discretization do
	in general not commute. If we introduce $\bar{u}$ via a feedback
	$u=\Phi_{u}(x,\bar{u})$ which explicitly depends on $x$, it is clear
	that for piecewise constant $u$ the new input $\bar{u}$ will in
	general not be constant over a sampling interval. Thus, an exact discretization
	with piecewise constant $\bar{u}$ will lead to a non-equivalent discrete-time
	system.
\end{rem}

Before we state the main result of this section, we introduce a certain
structurally flat triangular system representation that has two useful
properties. First, it allows to determine the parameterizing map \eqref{eq:cont_parameterizing_map_gen}
in a straightforward way. Second, with such a representation the property
of flatness is preserved by an Euler-discretization, as we show below.
\begin{defn}
	A system representation
	\begin{align}
	\begin{aligned}\dot{\bar{x}}_{p} & =f_{p}(\bar{x}_{p},\bar{x}_{p-1},\bar{u}_{p-1})\\
	\dot{\bar{x}}_{p-1} & =f_{p-1}(\bar{x}_{p},\dots,\bar{x}_{p-2},\bar{u}_{p-1},\bar{u}_{p-2})\\
	& \vdots\\
	\dot{\bar{x}}_{2} & =f_{2}(\bar{x}_{p},\dots,\bar{x}_{1},\bar{u}_{p-1},\dots,\bar{u}_{1})\\
	\dot{\bar{x}}_{1} & =f_{1}(\bar{x}_{p},\dots,\bar{x}_{1},\bar{u}_{p-1},\dots,\bar{u}_{1},\bar{u}_{0})
	\end{aligned}
	\label{eq:struct_triang_rep}
	\end{align}
	with the state $\bar{x}=(\bar{x}_{p},\bar{x}_{p-1}\dots,\bar{x}_{1})=(y_{p},(y_{p-1},\hat{x}_{p-1}),\dots,(y_{1},\hat{x}_{1}))$
	and the input $\bar{u}=(\bar{u}_{p-1},\dots,\bar{u}_{0})$ (with possibly
	empty components $y_{l}$ and/or $\bar{u}_{l}$) that satisfies the
	rank conditions
	\begin{align}
	\begin{aligned}\mathrm{rank}(\partial_{\bar{u}_{0}}f_{1}) & =\dim(f_{1})=\dim(\bar{u}_{0})\\
	\textrm{\ensuremath{\mathrm{rank}}}(\partial_{(\hat{x}_{l},\bar{u}_{l})}f_{l+1}) & =\dim(f_{l+1})=\dim(\hat{x}_{l})+\dim(\bar{u}_{l})
	\end{aligned}
	\label{eq:rank_cond_triangular}
	\end{align}
	for $l=1,\dots,p-1$ is denoted in the following as structurally flat
	triangular form. A flat output is given by $y=(y_{p},\dots,y_{1})$.
\end{defn}

The rank conditions \eqref{eq:rank_cond_triangular} guarantee that
all states and inputs can be expressed as functions of the flat output
$y$ and its time derivatives by solving the equations \eqref{eq:struct_triang_rep}
from top to bottom. Since the Euler-discretization of a continuous-time
system \eqref{eq:struct_triang_rep} has the same triangular structure,
we immediately get the following theorem.
\begin{thm}
	\label{thm:Exp_euler_of_triangular_rep_is_flat}The Euler-discretization
	of a continuous-time system in structurally flat triangular form \eqref{eq:struct_triang_rep}
	is forward-flat, and the flat output $y=(y_{p},\dots,y_{1})$ is preserved.
\end{thm}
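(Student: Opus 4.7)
The plan is to exploit the observation that applying the Euler-discretization \eqref{eq:euler_disk_equations} equation-wise to the system \eqref{eq:struct_triang_rep} leaves the argument structure of each right-hand side $f_l$ unchanged. Hence the discretized system is again of triangular shape, with the only difference that $\dot{\bar{x}}_l$ is replaced by $(\bar{x}_l^+-\bar{x}_l)/T_s$ on the left-hand side. The parameterization by the flat output can then be carried out by the very same top-down elimination as in the continuous-time case, only with forward shifts of $y$ in place of time derivatives.

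Concretely, first I would set $y_p=\bar{x}_p$ and use the topmost discretized equation $(\bar{x}_p^+-\bar{x}_p)/T_s = f_p(\bar{x}_p,y_{p-1},\hat{x}_{p-1},\bar{u}_{p-1})$ together with the rank condition $\mathrm{rank}(\partial_{(\hat{x}_{p-1},\bar{u}_{p-1})}f_p)=\dim(\hat{x}_{p-1})+\dim(\bar{u}_{p-1})$ from \eqref{eq:rank_cond_triangular}. The implicit function theorem then yields $(\hat{x}_{p-1},\bar{u}_{p-1})$, and therefore $\bar{x}_{p-1}$, as functions of $y_p,y_p^+,y_{p-1}$. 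Applying forward shifts to these expressions provides $\bar{x}_{p-1}^+$ in terms of shifts of $y$, so the second discretized equation can be inverted for $(\hat{x}_{p-2},\bar{u}_{p-2})$ via the corresponding rank condition. I would iterate this inductively for $l=p-1,p-2,\dots,1$, where at the final step the rank condition $\mathrm{rank}(\partial_{\bar{u}_0}f_1)=\dim(\bar{u}_0)$ is used to solve the bottom equation for $\bar{u}_0$. This produces parameterizations $\bar{x}=F_x(y,y_{[1]},\dots,y_{[r-1]})$ and $\bar{u}=F_u(y,y_{[1]},\dots,y_{[r]})$ as required by \eqref{eq:discrete_parameterizing_map_general}, with the same $m$-tuple $y=(y_p,\dots,y_1)$ as in the continuous-time case.

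The main point to verify — and at the same time the reason why the statement is essentially immediate — is that the rank conditions \eqref{eq:rank_cond_triangular} are conditions on the static maps $f_l$ themselves and do not involve any differentiation of the dynamics. Consequently they transfer verbatim from $\dot{\bar{x}}_l=f_l(\cdot)$ to $(\bar{x}_l^+-\bar{x}_l)/T_s=f_l(\cdot)$, so that the triangular elimination procedure used to construct the flat parameterization in the continuous-time setting applies without modification. I do not expect any serious obstacle; the only bookkeeping task is to track the number of forward shifts of $y$ that accumulate at each level of the induction, which determines the multi-index $r$ of the discrete-time parameterizing map.
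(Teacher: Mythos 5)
Your proposal is correct and follows essentially the same route as the paper: write down the Euler-discretization, observe that it retains the triangular argument structure, and use the rank conditions \eqref{eq:rank_cond_triangular} to solve the discretized equations from top to bottom for all states and inputs in terms of $y$ and its forward-shifts. The paper states this elimination in one sentence, whereas you make the implicit-function-theorem steps explicit; the content is identical.
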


\begin{proof}
	The Euler-discretization of \eqref{eq:struct_triang_rep} is given
	by
	\begin{align}
	\begin{aligned}\bar{x}_{p}^{+} & =\bar{x}_{p}+T_{s}f_{p}(\bar{x}_{p},\bar{x}_{p-1},\bar{u}_{p-1})\\
	\bar{x}_{p-1}^{+} & =\bar{x}_{p-1}+T_{s}f_{p-1}(\bar{x}_{p},\dots,\bar{x}_{p-2},\bar{u}_{p-1},\bar{u}_{p-2})\\
	& \vdots\\
	\bar{x}_{2}^{+} & =\bar{x}_{2}+T_{s}f_{2}(\bar{x}_{p},\dots,\bar{x}_{1},\bar{u}_{p-1},\dots,\bar{u}_{1})\\
	\bar{x}_{1}^{+} & =\bar{x}_{1}+T_{s}f_{1}(\bar{x}_{p},\dots,\bar{x}_{1},\bar{u}_{p-1},\dots,\bar{u}_{1},\bar{u}_{0})\,,
	\end{aligned}
	\label{eq:struct_triang_rep_disc}
	\end{align}
	and due to the rank conditions \eqref{eq:rank_cond_triangular} all
	states $\bar{x}$ and inputs $\bar{u}$ can be expressed as functions
	of the flat output $y$ and its forward-shifts by solving the equations
	\eqref{eq:struct_triang_rep_disc} from top to bottom. Thus, the discrete-time
	system \eqref{eq:struct_triang_rep_disc} is forward-flat with the
	flat output $y=(y_{p},\dots,y_{1})$.
\end{proof}
\begin{rem}
	The structurally flat triangular form \eqref{eq:struct_triang_rep}
	is actually more restrictive than necessary. With a more complex notation,
	it would be possible to state a more general version of \eqref{eq:struct_triang_rep}
	that still possesses a forward-flat Euler-discretization. Alternative
	flat normal forms can be found e.g. in \cite{GstoettnerKolarSchoeberl2021},
	\cite{GstoettnerKolarSchoeberl2020}, \cite{NicolauRespondek:2019}
	or \cite{NicolauRespondek:2020}. Furthermore, it should be noted
	that the flat output of \eqref{eq:struct_triang_rep} depends only
	on state variables. Thus, only x-flat systems can possess a triangular
	form \eqref{eq:struct_triang_rep}.
\end{rem}

If it is possible to transform a system \eqref{eq:cont_sys_equation}
by state- and input transformations\begin{subequations}
	\label{state_input_transformation_sub}
	\begin{align}
	\bar{x}&=\Phi_x(x)
	\label{state_transformation}\\
	\bar{u}&=\Phi_u(x,u)
	\label{input_transformation}
	\end{align}
\end{subequations}into the structurally flat triangular form \eqref{eq:struct_triang_rep},
then a combination of this transformation and a subsequent Euler-discretization
apparently yields a forward-flat discrete-time system. Nevertheless,
a discretization where the original input $u$ is preserved would
be favorable. This is indeed possible, since by Lemma \ref{lem:Input_transf_and_euler_disk_commute}
the structurally flat discrete-time triangular form \eqref{eq:struct_triang_rep_disc}
can also be obtained by discretizing the system \eqref{eq:cont_sys_equation}
after the state transformation \eqref{state_transformation} and performing
the input transformation \eqref{input_transformation} afterwards.
Since an input transformation does not affect the flatness of a system,
we get the following corollary.
\begin{cor}
	\label{cor:If-a-flat}If a flat continuous-time system can be transformed
	into a structurally flat triangular form \eqref{eq:struct_triang_rep}
	by a transformation \eqref{state_input_transformation_sub}, then
	the Euler-discretization $\bar{x}^{+}=\bar{x}+T_{s}f(\bar{x},u)$
	of the system $\dot{\bar{x}}=f(\bar{x},u)$ obtained by the state
	transformation \eqref{state_transformation} is forward-flat.
\end{cor}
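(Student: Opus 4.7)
The plan is to chain together Lemma~\ref{lem:Input_transf_and_euler_disk_commute} with Theorem~\ref{thm:Exp_euler_of_triangular_rep_is_flat}, exploiting the commutative diagram displayed just before Lemma~\ref{lem:Input_transf_and_euler_disk_commute}. By hypothesis there exist transformations \eqref{state_transformation} and \eqref{input_transformation} which carry the system \eqref{eq:cont_sys_equation} into the structurally flat triangular form \eqref{eq:struct_triang_rep}. Performing only the state transformation yields an intermediate continuous-time system $\dot{\bar{x}}=f(\bar{x},u)$, whose Euler-discretization $\bar{x}^{+}=\bar{x}+T_{s}f(\bar{x},u)$ is precisely the discrete-time system whose forward-flatness is to be established.

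Next, I would apply the input transformation \eqref{input_transformation}, rewritten in the new state coordinates $\bar{x}$, to this Euler-discretization. Lemma~\ref{lem:Input_transf_and_euler_disk_commute} then ensures that the resulting discrete-time system coincides with the Euler-discretization of the system obtained by first performing the input transformation at the continuous-time level, i.e.\ with the Euler-discretization \eqref{eq:struct_triang_rep_disc} of the structurally flat triangular form. By Theorem~\ref{thm:Exp_euler_of_triangular_rep_is_flat} this latter discrete-time system is forward-flat with flat output $y=(y_{p},\dots,y_{1})$.

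The final step is to observe that the input transformation is a diffeomorphism between $u$ and $\bar{u}$ at fixed state, so it preserves forward-flatness: any parameterization of $\bar{x}$ and $\bar{u}$ by $y$ and its forward-shifts can be converted into a parameterization of $\bar{x}$ and $u$ by composing with $u=\Phi_{u}^{-1}(\bar{x},\bar{u})$ evaluated along the parameterized trajectories. Hence $\bar{x}^{+}=\bar{x}+T_{s}f(\bar{x},u)$ inherits forward-flatness with the same flat output $y$.

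I do not expect any substantive obstacle; the argument is essentially a diagram chase around the square preceding Lemma~\ref{lem:Input_transf_and_euler_disk_commute}. The only point where a little care is required is the invocation of Lemma~\ref{lem:Input_transf_and_euler_disk_commute} itself, since the input transformation has to be expressed consistently in the $\bar{x}$-coordinates before it can be applied to the already discretized system; once this bookkeeping is in place, the conclusion is immediate.
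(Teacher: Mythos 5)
Your argument is correct and follows essentially the same route as the paper: commute the input transformation past the Euler-discretization via Lemma~\ref{lem:Input_transf_and_euler_disk_commute}, invoke Theorem~\ref{thm:Exp_euler_of_triangular_rep_is_flat} for the fully transformed system, and note that an invertible input transformation preserves forward-flatness. The only difference is that you spell out the last step (converting the parameterization of $(\bar{x},\bar{u})$ into one of $(\bar{x},u)$), which the paper leaves implicit.
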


We want to mention that the choice of a state is not unique, and hence
it is justified to perform a suitable state transformation prior to
the discretization. Nevertheless, it should be noted that not every
flat continuous-time system possesses a triangular representation
\eqref{eq:struct_triang_rep}. However, for many practical examples
like e.g. the gantry crane \cite{KolarRamsSchlacher:2017}, the VTOL
aircraft \cite{FliessLevineMartinRouchon:1999}, or the induction
motor \cite{Chiasson:1998} such a representation exists.

\subsection{\label{subsec:Forward-flat-sampled-data-model}Forward-flat sampled-data
	model of a gantry crane}

\begin{figure}
	\centering\includegraphics[width=0.5\columnwidth]{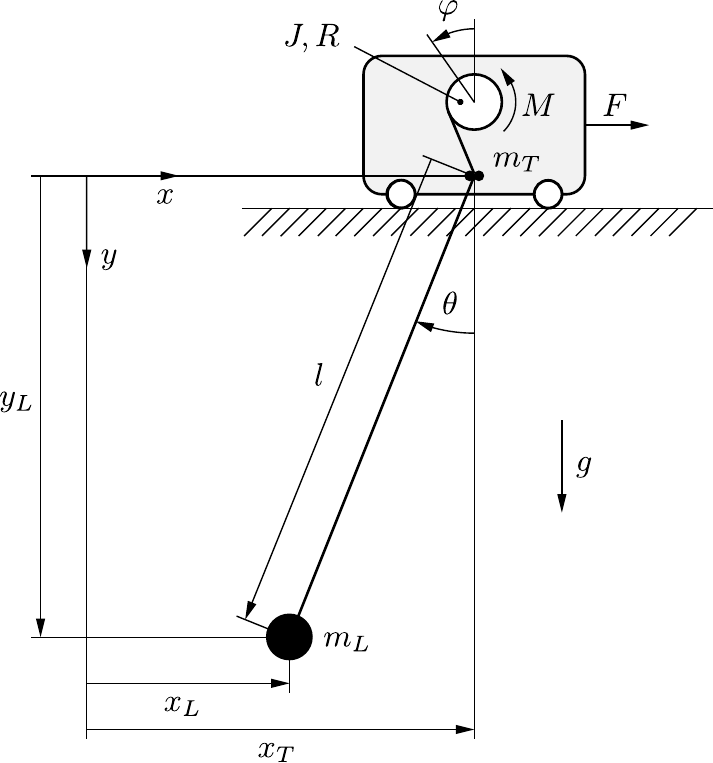}
	
	\caption{\label{fig:Schematic}Schematic diagram of the gantry crane.}
\end{figure}

In Fig. \ref{fig:Schematic}, the schematic model of the considered
gantry crane is shown. The position of the trolley is denoted by $x_{T}$,
the rotation angle of the rope drum is denoted by $\varphi$, and
$\theta$ describes the pendulum angle. With the length of the rope
$l=R\varphi$, the position of the load is given by $x_{L}=x_{T}-R\varphi\sin(\theta)$
and $y_{L}=R\varphi\cos(\theta)$. The mass of the trolley is denoted
by $m_{T}$, and $J$ represents the moment of inertia of the rope
drum. The load has the mass $m_{L}$, and the gravitational acceleration
$g$ points in the positive $y$-direction. The driving force $F$
which acts on the trolley and the driving torque $M$ which acts on
the rope drum are the control inputs. The equations of motion can
be found e.g. in \cite{KolarRamsSchlacher:2017} and read as
\begin{align*}
\begin{aligned}(m_{T}+m_{L})\ddot{x}_{T}-m_{L}R\sin(\theta)\ddot{\varphi}-m_{L}R\varphi\cos(\theta)\ddot{\theta}\;\\
+m_{L}\dot{\theta}(R\varphi\dot{\theta}\sin(\theta)-2R\dot{\varphi}\cos(\theta)) & =F\,,\\
-m_{L}R\sin(\theta)\ddot{x}_{T}+(J+m_{L}R^{2})\ddot{\varphi}\;\\
-m_{L}R(R\varphi\dot{\theta}^{2}+g\cos(\theta)) & =M\,,\\
-\cos(\theta)\ddot{x}_{T}+R\varphi\ddot{\theta}+2R\dot{\varphi}\dot{\theta}+g\sin(\theta) & =0\,.
\end{aligned}
\end{align*}
A state representation \eqref{eq:cont_sys_equation} is given by
\begin{equation}
\begin{aligned}\begin{aligned}\dot{x}_{T} & =v_{T}\\
\dot{\varphi} & =\omega_{\varphi}\\
\dot{\theta} & =\omega_{\theta}\\
\text{\ensuremath{\dot{v}_{T}}} & =f_{a_{T}}(\varphi,\theta,\omega_{\theta},F,M)\\
\ensuremath{\dot{\omega}_{\varphi}} & =f_{\alpha_{\varphi}}(\varphi,\theta,\omega_{\theta},F,M)\\
\ensuremath{\dot{\omega}_{\theta}} & =f_{\alpha_{\theta}}(\varphi,\theta,\omega_{\varphi},\omega_{\theta},F,M)
\end{aligned}
\end{aligned}
\label{eq:gantry_original_coordinates}
\end{equation}
with the state $x=(x_{T},\varphi,\theta,v_{T},\omega_{\varphi},\omega_{\theta})$
and the input $u=(F,M)$. It is well-known that the model of the gantry
crane is flat and that the position of the load is a flat output $y=(x_{T}-R\varphi\sin(\theta),R\varphi\cos(\theta))$.
In order to derive a sampled-data model, in a first attempt one may
simply compute the Euler-discretization of \eqref{eq:gantry_original_coordinates}.
However, it can be shown that the resulting discrete-time system does
not meet the necessary conditions given in \cite{KolarSchoberlDiwold:2019}
and is thus not forward-flat. In fact, the test for forward-flatness
stated in \cite{KolarDiwoldSchoberl:2019} fails already in the second
step. Thus, we follow the approach suggested in Section \ref{subsec:Discretization-of-flat}
and search for a state- and input transformation \eqref{state_input_transformation_sub}
that transforms \eqref{eq:gantry_original_coordinates} into a structurally
flat triangular representation \eqref{eq:struct_triang_rep}. It turns
out that for the gantry crane such a transformation indeed exists.
A suitable state transformation is given by
\begin{equation}
\begin{aligned}\begin{aligned}x_{L} & =x_{T}-R\varphi\sin(\theta)\\
y_{L} & =R\varphi\cos(\theta)\\
v_{L,x} & =v_{T}-\omega_{\varphi}R\sin(\theta)-\omega_{\theta}R\varphi\cos(\theta)\\
v_{L,y} & =R(\omega_{\varphi}\cos(\theta)-\sin(\theta)\varphi\omega_{\theta})\,,
\end{aligned}
\end{aligned}
\label{eq:coord_transf}
\end{equation}
where the transformed state $\bar{x}=(x_{L},y_{L},v_{L,x},v_{L,y},\theta,\omega_{\theta})$
contains the flat output and its first time derivative. In such coordinates,
the system equations read as
\begin{equation}
\begin{aligned}\begin{aligned}\dot{x}_{L} & =v_{L,x}\\
\dot{y}_{L} & =v_{L,y}\\
\text{\ensuremath{\dot{\theta}}} & =\omega_{\theta}\\
\dot{v}_{L,x} & =f_{a_{L,x}}(y_{L},\theta,\omega_{\theta},M)\\
\dot{v}_{L,y} & =f_{a_{L,y}}(y_{L},\theta,\omega_{\theta},M)\\
\ensuremath{\dot{\omega}_{\theta}} & =f_{\alpha_{\theta}}(v_{L,y},\theta,\omega_{\theta},F,M)\,.
\end{aligned}
\end{aligned}
\label{eq:ganry_new_state}
\end{equation}
If we additionally introduce the new input $\bar{u}=(a_{L,x},\alpha_{\theta})$
defined by
\begin{align*}
\begin{aligned}\begin{aligned}a_{L,x} & =f_{a_{L,x}}(y_{L},\theta,\omega_{\theta},M)\\
\alpha_{\theta} & =f_{\alpha_{\theta}}(v_{L,y},\theta,\omega_{\theta},F,M)\,,
\end{aligned}
\end{aligned}
\end{align*}
i.e., the acceleration $a_{L,x}$ of the load in $x$-direction and
the angular acceleration $\alpha_{\theta}$, we obtain a structurally
flat triangular representation
\begin{equation}
\begin{aligned}\dot{x}_{L} & =v_{L,x}\\
\dot{y}_{L} & =v_{L,y}\\
\dot{v}_{L,x} & =a_{L,x}\\
\dot{v}_{L,y} & =g-\tfrac{a_{L,x}}{\tan(\theta)}\\
\text{\ensuremath{\dot{\theta}}} & =\omega_{\theta}\\
\ensuremath{\dot{\omega}_{\theta}} & =\alpha_{\theta}
\end{aligned}
\label{eq:gantry_struct_triang_rep}
\end{equation}
with the flat output $y=(x_{L},y_{L})$. Due to the triangular structure
the Euler-discretization of \eqref{eq:gantry_struct_triang_rep} is
forward-flat (see Theorem \ref{thm:Exp_euler_of_triangular_rep_is_flat}),
and because of Corollary \ref{cor:If-a-flat} the same applies to
the Euler-discretization of \eqref{eq:ganry_new_state} with the original
inputs $(F,M)$ given by
\begin{equation}
\begin{aligned}\begin{aligned}x_{L}^{+} & =x_{L}+T_{s}v_{L,x}\\
y_{L}^{+} & =y_{L}+T_{s}v_{L,y}\\
\text{\ensuremath{\theta^{+}}} & =\theta+T_{s}\omega_{\theta}\\
v_{L,x}^{+} & =v_{L,x}+T_{s}f_{a_{L,x}}(y_{L},\theta,\omega_{\theta},M)\\
v_{L,y}^{+} & =v_{L,y}+T_{s}f_{a_{L,y}}(y_{L},\theta,\omega_{\theta},M)\\
\ensuremath{\omega_{\theta}^{+}} & =\omega_{\theta}+T_{s}f_{\alpha_{\theta}}(v_{L,y},\theta,\omega_{\theta},F,M)\,.
\end{aligned}
\end{aligned}
\label{eq:sampled_data_model_gantry_crane}
\end{equation}
The corresponding parameterizing map \eqref{eq:discrete_parameterizing_map_general}
is of the form
\begin{align}
\begin{aligned}\bar{x} & =F_{\bar{x}}(x_{L},y_{L},\dots,x_{L,[3]},y_{L,[3]})\\
u & =F_{u}(x_{L},y_{L},\dots,x_{L,[4]},y_{L,[4]})
\end{aligned}
\label{eq:parameterizing_map_gantry}
\end{align}
and will be used in the next section for the design of two different
tracking controllers. For simplicity, in the remainder of the paper
the bar-notation will be omitted.

\section{Discrete-time Flatness-based Tracking Control\label{sec:Discrete-time-Flatness-based-Tra}}

The design of a continuous-time flatness-based tracking control is
typically divided into two steps. First, the system is exactly linearized
by either an endogenous dynamic feedback or a quasi-static state feedback,
see e.g. \cite{FliessLevineMartinRouchon:1995} and \cite{DelaleauRudolph:1998}.
Subsequently, a further feedback is applied in order to stabilize
reference trajectories. However, since the resulting control law is
assumed to be evaluated continuously, it actually cannot be realized
on a digital computer/processor with only a finite sampling rate.
Nevertheless, for applications where the sampling rate can be chosen
sufficiently high, a discrete-time evaluation of the control law usually
leads to satisfying results.

In this section, however, we propose an alternative approach and design
flatness-based tracking controllers directly for the sampled-data
model \eqref{eq:sampled_data_model_gantry_crane} of the gantry crane.
Like in the continuous-time case, we follow the two-step design philosophy
with an exact linearization and a subsequent stabilizing feedback.
For the exact linearization we use two different approaches based
on a dynamic as well as a quasi-static state feedback. Furthermore,
we demonstrate how the controllers can be extended by integral parts
in order to compensate for stationary deviations caused e.g. by model
uncertainties.

\subsection{\label{subsec:Design_of_dyn_feedback}Tracking control with linearization
	by endogenous dynamic feedback}

In \cite{DiwoldKolarSchoeberl2022}, we have shown that discrete-time
flat systems (including both forward- and backward shifts) can be
linearized by a certain class of dynamic feedback
\begin{align*}
\begin{aligned}z^{+} & =\alpha(x,z,v)\\
u & =\beta(x,z,v)
\end{aligned}
\end{align*}
that closely mimics the class of endogenous dynamic feedback from
the continuous-time case. In the following, we demonstrate the construction
of such a linearizing dynamic feedback as well as an additional feedback
to stabilize reference trajectories.

First, we extend the parameterizing map \eqref{eq:discrete_parameterizing_map_general}
by a map $z=F_{z}(y,\dots,y_{[r-1]})$ with $\dim(z)=\sum_{j=1}^{m}r_{j}-n$,
such that the combined map $F_{xz}=(F_{x},F_{z})$ has a regular Jacobian
matrix and is hence invertible. Since $F_{x}$ is a submersion, such
an extension always exists. Next, we define the map $\Phi(y,\dots,y_{[r]})$
given by
\begin{equation}
\begin{aligned}x & =F_{x}(y,\dots,y_{[r-1]})\\
z & =F_{z}(y,\dots,y_{[r-1]})\\
v & =y_{[r]}
\end{aligned}
\label{eq:phi}
\end{equation}
and its inverse $\hat{\Phi}(x,z,v)$ given by 
\begin{align}
\begin{array}{c}
\begin{aligned}(y,\dots,y_{[r-1]}) & =\hat{F}_{xz}(x,z)\\
y_{[r]} & =v\,.
\end{aligned}
\end{array}\label{eq:phi_d}
\end{align}
As shown in \cite{DiwoldKolarSchoeberl2022}, a linearizing dynamic
feedback follows by evaluating\footnote{Note that we use the shift operator $\delta_{y}$ defined in \cite{DiwoldKolarSchoeberl2022},
	which acts in the special case of forward-flatness on a function $H$
	according to the rule $\delta_{y}(H(y,y_{[1]},\dots))=H(y_{[1]},y_{[2]},\dots)$.}
\begin{align}
\begin{aligned}z^{+} & =\delta_{y}(F_{z})\circ\hat{\Phi}(x,z,v)\\
u & =F_{u}\circ\hat{\Phi}(x,z,v)\,.
\end{aligned}
\label{eq:dynamic_feedback}
\end{align}
The state- and input transformation \eqref{eq:phi_d} would transform
the closed-loop
\begin{align*}
\begin{aligned}x^{+} & =f(x,F_{u}\circ\hat{\Phi}(x,z,v))\\
z^{+} & =\delta_{y}(F_{z})\circ\hat{\Phi}(x,z,v)
\end{aligned}
\end{align*}
into Brunovsky normal form $y_{[r_{j}]}^{j}=v^{j}$ for $j=1,\dots,m$.
In such coordinates, a tracking controller for a reference trajectory
$y_{d}(k)$ can now be designed easily. For this purpose, we introduce
the tracking error and its forward-shifts as
\begin{equation}
e_{[i]}^{j}=y_{[i]}^{j}-y_{d,[i]}^{j}\,,\hphantom{a}j=1,\dots,m\,.\label{eq:def_err}
\end{equation}
In order to stabilize a reference trajectory we use a linear tracking
error dynamics of the form
\begin{equation}
e_{[r_{j}]}^{j}+\sum_{i=0}^{r_{j}-1}a_{i}^{j}e_{[i]}^{j}=0\,,\hphantom{a}j=1,\dots,m\,,\label{eq:error_dynamic}
\end{equation}
where the coefficients $a_{i}^{j}$ are chosen such that the roots
of \eqref{eq:error_dynamic} are located within the unit circle. By
substituting the definition \eqref{eq:def_err} of the tracking error
into \eqref{eq:error_dynamic} and solving for $y_{[r_{j}]}=v$, the
stabilizing feedback for the new input $v$ follows as
\begin{equation}
v^{j}=y_{d,[r_{j}]}^{j}-\sum_{i=0}^{r_{j}-1}a_{i}^{j}(y_{[i]}^{j}-y_{d,[i]}^{j})\,,\hphantom{a}j=1,\dots,m\,.\label{eq:feedback_law}
\end{equation}
In a last step, we have to express the forward-shifts of the flat
output that appear in \eqref{eq:feedback_law} by the system state
$x$ and the controller state $z$. By substituting $(y,\dots,y_{[r-1]})=\hat{F}_{xz}(x,z)$
according to the map \eqref{eq:phi_d} into \eqref{eq:feedback_law}
and subsequently \eqref{eq:feedback_law} into \eqref{eq:dynamic_feedback},
the complete control law follows as
\begin{align*}
\begin{aligned}z^{+} & =\zeta(x,z,y_{d},\dots,y_{d,[r]})\\
u & =\eta(x,z,y_{d},\dots,y_{d,[r]})\,.
\end{aligned}
\end{align*}
It depends on the state $x$ of the system \eqref{eq:disc_sys_equation},
the controller state $z$, and the reference trajectory $y_{d}$ as
well as its forward-shifts. Thus, for a practical implementation,
the state $x$ has to be measured or estimated by an observer.

For the sampled-data model \eqref{eq:sampled_data_model_gantry_crane}
of the gantry crane, a linearizing dynamic feedback \eqref{eq:dynamic_feedback}
can be constructed by choosing $F_{z}$ of \eqref{eq:phi} as\footnote{Note that the choices $z^{1}=x_{L[2]}$, $z^{2}=x_{L[3]}$ and $z^{1}=x_{L[3]}$,
	$z^{2}=y_{L[3]}$ would also be possible. However, the corresponding
	dynamic feedbacks \eqref{eq:dynamic_feedback} have singularities
	at $\theta=\omega_{\theta}=0$ and are hence not useful for a practical
	application.}
\begin{align*}
z^{1} & =y_{L,[2]}\\
z^{2} & =y_{L,[3]}\,.
\end{align*}
The order of the resulting tracking error dynamics \eqref{eq:error_dynamic}
is given by $r=(4,4)$. As we show in the next section, a lower order
can be achieved by a linearization with a quasi-static feedback.

\subsection{\label{subsec:Design_of_quasi-static}Tracking control with linearization
	by quasi-static state feedback}

In \cite{DelaleauRudolph:1998}, it is shown that every flat continuous-time
system can be linearized by a quasi-static state feedback. In contrast
to the linearization by dynamic feedback, the resulting control law
is static but involves time derivatives of the closed-loop input.
Despite these time derivatives of the new input, the approach is very
useful for the design of flatness-based tracking controls. The main
advantage over designs with dynamic feedback linearization is that
it leads to lower order tracking error dynamics. In the following,
we transfer this idea to discrete-time systems and design a tracking
control based on a linearization by a discrete-time quasi-static state
feedback for the sampled-data model \eqref{eq:sampled_data_model_gantry_crane}
of the gantry crane.

The exact linearization of flat systems by quasi-static feedback discussed
in \cite{DelaleauRudolph:1998} is based on the concept of generalized
state representations. According to \cite{Kotta1998}, for discrete-time
systems generalized state representations are of the form
\[
\tilde{x}^{+}=f(\tilde{x},u,u_{[1]},\dots,u_{[\gamma]})
\]
and can also depend on forward-shifts of the input (instead of time
derivatives in the continuous-time case). Such a generalized system
representation follows from \eqref{eq:disc_sys_equation} by a generalized
state transformation of the form
\begin{equation}
\tilde{x}=\Psi_{x}(x,u,u_{[1]},\dots,u_{[\nu]})\label{eq:generalized_state_transformation}
\end{equation}
with $\textrm{rank}(\partial_{x}\Psi_{x})=n$. The rank condition
ensures that the generalized state transformation \eqref{eq:generalized_state_transformation}
is invertible. For the class of forward-flat systems, it can be shown
that there always exist generalized states of the form
\begin{equation}
\tilde{x}_{B}=(y^{1},y_{[1]}^{1},\dots,y_{[\kappa_{1}-1]},\dots,y^{m},y_{[1]}^{m},\dots,y_{[\kappa_{m}-1]}^{m})\label{eq:gen_brunvosky_state}
\end{equation}
which consist of the components of the flat output and its forward-shifts
up to some order $\kappa-1=(\kappa_{1}-1,\dots,\kappa_{m}-1)$ with
$\dim(\tilde{x}_{B})=n$ and $\kappa_{j}\leq r_{j}$. In accordance
with the continuous-time case \cite{DelaleauRudolph:1998}, we call
\eqref{eq:gen_brunvosky_state} a generalized Brunovsky state. With
a feedback of the form
\begin{equation}
u=F_{u}(\tilde{x}_{B},v,v_{[1]},\dots,v_{[r-\kappa]})\,,\label{eq:quasi_static_linearizing_feedback}
\end{equation}
which is constructed by replacing the $n$ components of $y,\dots,y_{[\kappa-1]}$
in the map $F_{u}$ of \eqref{eq:discrete_parameterizing_map_general}
with the Brunovsky state $\tilde{x}_{B}$ and the remaining components
$y_{[\kappa]},\dots,y_{[r]}$ with the new input and its forward-shifts
$v,\dots,v_{[r-\kappa]}$, we get an input/output behaviour of the
form 
\begin{equation}
y_{[\kappa_{j}]}^{j}=v^{j}\,,\hphantom{a}j=1,\dots,m\,.\label{eq:brunv_quasi_static}
\end{equation}
In the continuous-time case discussed in \cite{DelaleauRudolph:1998},
such a feedback is called a quasi-static state feedback\footnote{For discrete-time systems, quasi-static feedbacks are used e.g. in
	\cite{ArandaKotta2001}.}. With the linear input/output behaviour \eqref{eq:brunv_quasi_static},
it is now easy to construct a further feedback in order to introduce
e.g. a linear tracking error dynamics
\begin{equation}
e_{[\kappa_{j}]}^{j}+\sum_{i=0}^{\kappa_{j}-1}a_{i}^{j}e_{[i]}^{j}=0\,,\hphantom{a}j=1,\dots,m\,.\label{eq:error_dynamic_quasi_static}
\end{equation}
The coefficients $a_{i}^{j}$ must be chosen such that the roots of
\eqref{eq:error_dynamic_quasi_static} are located within the unit
circle. By substituting the definition \eqref{eq:def_err} of the
tracking error into \eqref{eq:error_dynamic_quasi_static} and solving
for $y_{[\kappa_{j}]}=v$, the stabilizing feedback follows as
\[
v^{j}=y_{d,[\kappa_{j}]}^{j}-\sum_{i=0}^{\kappa_{j}-1}a_{i}^{j}(y_{[i]}^{j}-y_{d,[i]}^{j})\,,\hphantom{a}j=1,\dots,m\,.
\]
The forward-shifts of $v$ that are required for the quasi-static
feedback \eqref{eq:quasi_static_linearizing_feedback} follow as
\[
v_{[p]}^{j}=y_{d,[\kappa_{j}+p]}^{j}-\sum_{i=0}^{\kappa_{j}-1}a_{i}^{j}(y_{[i+p]}^{j}-y_{d,[i+p]}^{j})
\]
for $p=1,\dots,r_{j}-\kappa_{j}$. With the relation $v_{[p]}^{j}=y_{[\kappa_{j}+p]}^{j}$,
which is a consequence of \eqref{eq:brunv_quasi_static}, we get the
set of linear equations
\begin{align}
\begin{aligned}v^{j} & =y_{d,[\kappa_{j}]}^{j}-\sum_{i=0}^{\kappa_{j}-1}a_{i}^{j}(y_{[i]}^{j}-y_{d,[i]}^{j})\\
v_{[1]}^{j} & =y_{d,[\kappa_{j}+1]}^{j}-a_{\kappa_{j}-1}^{j}(v^{j}-y_{d,[\kappa_{j}]}^{j})-\sum_{i=0}^{\kappa_{j}-2}a_{i}^{j}(y_{[i+1]}^{j}-y_{d,[i+1]}^{j})\\
& \vdots\\
v_{[r_{j}-\kappa_{j}]}^{j} & =\ldots
\end{aligned}
\label{eq:set_of_equ_quasi_static}
\end{align}
$j=1,\dots,m$ for $v,\dots,v_{[r-\kappa]}$. One could now solve
\eqref{eq:set_of_equ_quasi_static} for $v,\dots,v_{[r-\kappa]}$
systematically from top to bottom and insert the solution into \eqref{eq:quasi_static_linearizing_feedback}.
However, the resulting control law would depend on the generalized
Brunovsky state \eqref{eq:gen_brunvosky_state}, i.e., on forward-shifts
of the flat output up to order $\kappa-1$. Since future values of
the flat output can in general not be measured, the resulting control
law is obviously difficult to implement. Thus, we try to replace the
flat output and its forward-shifts beforehand, which is possible if
two assumptions are met. First, we assume that with
\[
x=F_{x}(\tilde{x}_{B},v,v_{[1]},\dots,v_{[r-\kappa-1]})\,,
\]
where we have inserted \eqref{eq:gen_brunvosky_state} as well as
\eqref{eq:brunv_quasi_static} and its forward-shifts into the map
$F_{x}$ of \eqref{eq:discrete_parameterizing_map_general}, the condition
$\textrm{\ensuremath{\mathrm{rank}}}(\partial_{\tilde{x}_{B}}F_{x})=n$
is met. This guarantees by the implicit function theorem that the
Brunovsky state can be expressed as 
\begin{equation}
\tilde{x}_{B}=\phi(x,v,v_{[1]},\dots,v_{[r-\kappa-1]})\,.\label{eq:solved_for_x_B}
\end{equation}
Substituting \eqref{eq:solved_for_x_B} into \eqref{eq:quasi_static_linearizing_feedback}
yields the linearizing feedback
\begin{equation}
u=F_{u}(\phi(x,v,v_{[1]},\dots,v_{[r-\kappa-1]}),v,v_{[1]},\dots,v_{[r-\kappa]})\,,\label{eq:quasi_static_linearizing_feedback_exp_by_x}
\end{equation}
which introduces the same input/output behaviour \eqref{eq:brunv_quasi_static}
as \eqref{eq:quasi_static_linearizing_feedback} but depends on $x$
instead of $\tilde{x}_{B}$. However, we also have to determine $v,v_{[1]},\dots,v_{[r-\kappa]}$
as a function of $x$ instead of $\tilde{x}_{B}$. For this purpose,
we substitute \eqref{eq:solved_for_x_B} into \eqref{eq:set_of_equ_quasi_static}
and obtain a set of in general nonlinear equations. Thus, our second
assumption is that this set of nonlinear equations can nevertheless
be solved for $v,\dots,v_{[r-\kappa]}$, i.e.,
\begin{equation}
(v,v_{[1]},\dots,v_{[r-\kappa]})=\chi(x,y_{d},\dots,y_{d,[r]})\,.\label{eq:solved_for_v_v_1}
\end{equation}
Finally, by substituting \eqref{eq:solved_for_v_v_1} into \eqref{eq:quasi_static_linearizing_feedback_exp_by_x},
we obtain a static control law of the form
\begin{equation}
u=\eta(x,y_{d},\dots,y_{d,[r]})\,,\label{eq:quasi_static_final_control_law}
\end{equation}
which only depends on the state $x$ as well as the reference trajectory
$y_{d}$ and its forward-shifts. Thus, like for the dynamic tracking
controller of Section \ref{subsec:Design_of_dyn_feedback}, only the
state $x$ is required for an implementation.

For the sampled-data model \eqref{eq:sampled_data_model_gantry_crane}
of the gantry crane, a generalized Brunovsky state \eqref{eq:gen_brunvosky_state}
is given by\footnote{The choices $\tilde{x}_{B}=(x_{L},\dots,x_{L,[2]},y_{L},\dots,y_{L,[2]})$
	and $\tilde{x}_{B}=(x_{L},x_{L,[1]},y_{L},\dots,y_{L,3]})$ would
	also be possible. However, the corresponding control laws \eqref{eq:quasi_static_final_control_law}
	would again have singularities at $\theta=\omega_{\theta}=0$ and
	are not useful for a practical application.}
\[
\tilde{x}_{B}=(x_{L},\dots,x_{L,[3]},y_{L},y_{L,[1]})\,.
\]
Since both above assumptions are met, a control law of the form \eqref{eq:quasi_static_final_control_law}
can be derived. Compared to the dynamic controller of Section \ref{subsec:Design_of_dyn_feedback},
the order of the resulting tracking error dynamics \eqref{eq:error_dynamic_quasi_static}
is reduced to $\kappa=(4,2)$.

\subsection{\label{subsec:Extending-the-tracking}Extension of the tracking error
	dynamics by integral parts}

In this section, we show how the discrete-time tracking control with
dynamic feedback of Section \ref{subsec:Design_of_dyn_feedback} can
be extended by discrete-time integral parts in order to avoid stationary
errors caused e.g. by model uncertainties. In the case of the gantry
crane, such errors occur mainly due to friction effects. With the
``integrated'' tracking error $e_{I}$ given by
\[
e_{I}^{j,+}=e_{I}^{j}+T_{s}e^{j}\,,\hphantom{a}j=1,\dots,m\,,
\]
the extended error dynamics reads as
\[
\left[\begin{array}{c}
e_{I}^{j,+}\\
e^{j,+}\\
\vdots\\
e_{[r_{j}-1]}^{j,+}
\end{array}\right]=\underset{A^{j}}{\underbrace{\left[\begin{array}{ccccc}
		1 & T_{s} & 0 & 0 & 0\\
		0 & 0 & 1 & \ddots & 0\\
		\vdots & \vdots & \ddots & \ddots & 0\\
		0 & 0 & \cdots & 0 & 1\\
		-a_{I}^{j} & -a_{0}^{j} & \cdots & \cdots & -a_{r_{j}-1}^{j}
		\end{array}\right]}}\left[\begin{array}{c}
e_{I}^{j}\\
e^{j}\\
\vdots\\
e_{[r_{j}-1]}^{j}
\end{array}\right]\,,
\]
$j=1,\dots,m$. To obtain a stable error dynamics, the coefficients
$a_{I}^{j}$ and $a_{i}^{j}$ must be chosen such that all eigenvalues
of the matrices $A^{j}$ are located within the unit circle. The extended
stabilizing feedback for the exactly linearized system follows as
\[
\bar{v}^{j}=v^{j}-a_{I}^{j}e_{I}^{j}\,,\hphantom{a}j=1,\dots,m\,,
\]
where $v$ represents the feedback without integral parts obtained
in Section \ref{subsec:Design_of_dyn_feedback}. For the controller
with quasi-static state feedback of Section \ref{subsec:Design_of_quasi-static},
an extension by integral parts is possible in a similar way. With
respect to the gantry crane, the integral parts increase the order
of the error dynamics to $(5,5)$ for the dynamic controller and $(5,3)$
for the quasi-static controller.

\section{Trajectory Planning and Measurement Results\label{sec:Trajectory-Planning-and}}

\begin{figure}
	\includegraphics[width=1\columnwidth]{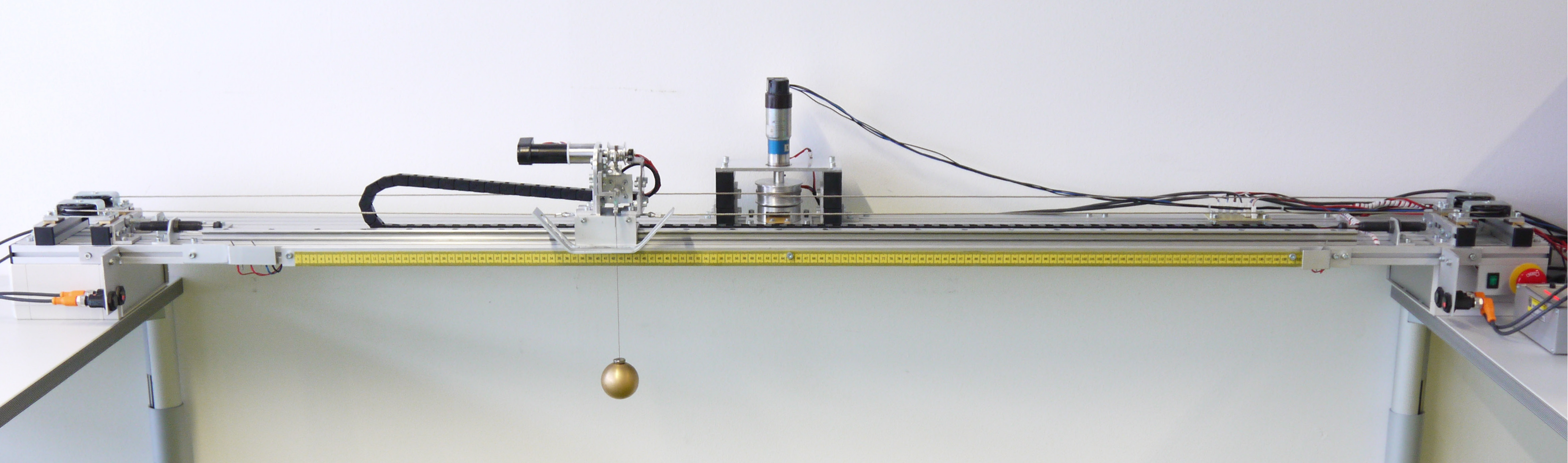}\centering\caption{\label{fig:Picture_gantry_crane}Laboratory model of the gantry crane.}
\end{figure}
In the following, we present measurement results for the novel discrete-time
quasi-static controller of Section \ref{subsec:Design_of_quasi-static},
which was implemented on a laboratory model of the gantry crane depicted
in Fig. \ref{fig:Picture_gantry_crane}. In particular, in Section
\ref{subsec:Comparison_cont_disc} we compare its tracking performance
for different sampling rates with the continuous-time quasi-static
tracking controller discussed in \cite{KolarRamsSchlacher:2017}.
Moreover, in Section \ref{subsec:Optimization} we show how the discrete-time
approach facilitates the design of optimal reference trajectories.
We formulate and solve an optimization problem and present measurement
results for the computed trajectories. Finally, in Section \ref{subsec:Lying_eight}
we show further measurement results for the tracking of a closed path.

Like in \cite{KolarRamsSchlacher:2017}, for all experiments the control
law \eqref{eq:quasi_static_final_control_law} is extended by a feedforward
compensation
\begin{align}
\begin{aligned}F_{fr,d} & =r_{v,T}v_{T,d}+r_{C,T}\textrm{sign}(v_{T,d})\\
M_{fr,d} & =r_{v,\varphi}\omega_{\varphi,d}+r_{C,\varphi}\textrm{sign}(\omega_{\varphi,d})
\end{aligned}
\label{eq:frict_comp}
\end{align}
of the friction in order to improve the tracking behaviour. The required
reference trajectories for the velocity of the trolley $v_{T}$ and
the angular velocity $\omega_{\varphi}$ of the rope drum can be computed
from the reference trajectories of the flat output $y=(x_{L},y_{L})$
by the inverse of the state transformation \eqref{eq:coord_transf}
and the parameterizing map \eqref{eq:parameterizing_map_gantry}.
The friction compensation \eqref{eq:frict_comp} is simply added to
the values of the control inputs $F$ and $M$ determined by the tracking
control law \eqref{eq:quasi_static_final_control_law} derived in
Section \ref{subsec:Design_of_quasi-static}. Like in \cite{KolarRamsSchlacher:2017},
we use an asymmetric friction compensation, i.e., the friction coefficients
depend on the sign of the velocities $v_{T,d}$ and $\omega_{\varphi,d}$.
In addition to the compensation of the friction, we also use integral
parts (cf. Section \ref{subsec:Extending-the-tracking}) to avoid
steady-state control deviations. As it is common practice for continuous-time
systems, these integral parts are activated only at the end of the
transitions, i.e., after the reference trajectory has already reached
the final desired position.

\subsection{\label{subsec:Comparison_cont_disc}Comparison continuous-time and
	discrete-time flatness-based control}

In this section, we compare the tracking performance of the discrete-time
quasi-static controller designed in Section \ref{subsec:Design_of_quasi-static}
with the continuous-time quasi-static controller of \cite{KolarRamsSchlacher:2017}.
For this purpose, as a benchmark, we use a transition of the load
from the rest position $y_{d,0}=(0.2\,\textrm{m},0.7\,\textrm{m})$
to the rest position $y_{d,T}=(1.0\,\textrm{m},0.5\,\textrm{m})$
in a transition time of $T=1.7\,\textrm{s}$. Before presenting measurement
results, we briefly discuss the design of the reference trajectories
in the continuous-time as well as in the discrete-time case.

In general, a continuous-time flatness-based control law involves
not only the reference trajectory $y_{d}(t)$ itself but also its
time derivatives. Hence, the trajectory needs to be sufficiently often
differentiable. Since we transfer the load between two rest positions,
it must satisfy the initial- and final conditions
\begin{equation}
y_{d}^{j}(0)=y_{d,0}^{j}\,,\hphantom{aa}y_{d}^{j}(T)=y_{d,T}^{j}\label{eq:intial_final_cond_1}
\end{equation}
and

\begin{equation}
\begin{aligned}\tfrac{d^{i}}{dt^{i}}y_{d}^{j}(t)\bigg\vert_{t=0}=\tfrac{d^{i}}{dt^{i}}y_{d}^{j}(t)\bigg\vert_{t=T} & =0\end{aligned}
\label{eq:intial_final_cond_2}
\end{equation}
for $i=1,\dots,r_{j}-1$ and $j=1,\dots,m$. For the gantry crane,
we have both for the $x$- and the $y$-direction four conditions
at $t=0$ as well as four conditions at $t=T$. Thus, the simplest
approach are polynomial trajectories
\begin{align}
\begin{aligned}x_{L,d}(t) & =\sum_{i=0}^{7}c_{x,i}t^{i}\\
y_{L,d}(t) & =\sum_{i=0}^{7}c_{y,i}t^{i}
\end{aligned}
\label{eq:polynomial_traj}
\end{align}
of order seven with suitably chosen coefficients $c_{x,i}$ and $c_{y,i}$.

In the discrete-time case, the control law \eqref{eq:quasi_static_final_control_law}
depends on the desired sequence $y_{d}(k)$ and its forward-shifts.
In contrast to the continuous-time case, differentiability is not
required. However, since we want to achieve a transition between rest
positions, the reference trajectory needs to satisfy the conditions
\begin{align}
\begin{aligned}y_{d}^{j}(0) & =\ldots=y_{d}^{j}(r_{j}-1)=y_{d,0}^{j}\\
y_{d}^{j}(N) & =\ldots=y_{d}^{j}(N+r_{j}-1)=y_{d,T}^{j}
\end{aligned}
\label{eq:disc_ref_traj_initial_final_cond}
\end{align}
for $j=1,\dots,m$ with $T=NT_{s}$, which are the discrete-time counterpart
to \eqref{eq:intial_final_cond_1} and \eqref{eq:intial_final_cond_2}.
Thus, in principle, for the gantry crane we could again use a polynomial
ansatz of order $(7,7)$. However, in the following we simply evaluate
the continuous-time reference trajectory at the time instants $t=kT_{s}$.
Since \eqref{eq:polynomial_traj} is used to connect rest positions,
the conditions \eqref{eq:disc_ref_traj_initial_final_cond} are obviously
satisfied.

Both the continuous-time and the discrete-time tracking controller
have been implemented on a dSPACE\textsuperscript{\textregistered}
real-time system with different sampling times $T_{s}$. The continous-time
control law of \cite{KolarRamsSchlacher:2017} is simply evaluated
at the time instants $t=kT_{s}$. In the first experiment, we compare
both controllers without integral parts and a sampling time of $T_{s}=10\,\textrm{ms}$.
The corresponding measurements are shown in Fig. \ref{fig:Poly10ms_kont_xy}
and Fig. \ref{fig:Poly10ms_disk_xy}. It can be observed that with
both controllers the reference trajectory is almost perfectly tracked.
Fig. \ref{fig:Poly10ms_err} shows the corresponding tracking errors.
The small deviations and stationary errors are caused mainly by the
non-ideal compensation of the friction.
\begin{figure}
	\centering\includegraphics[width=0.47\columnwidth]{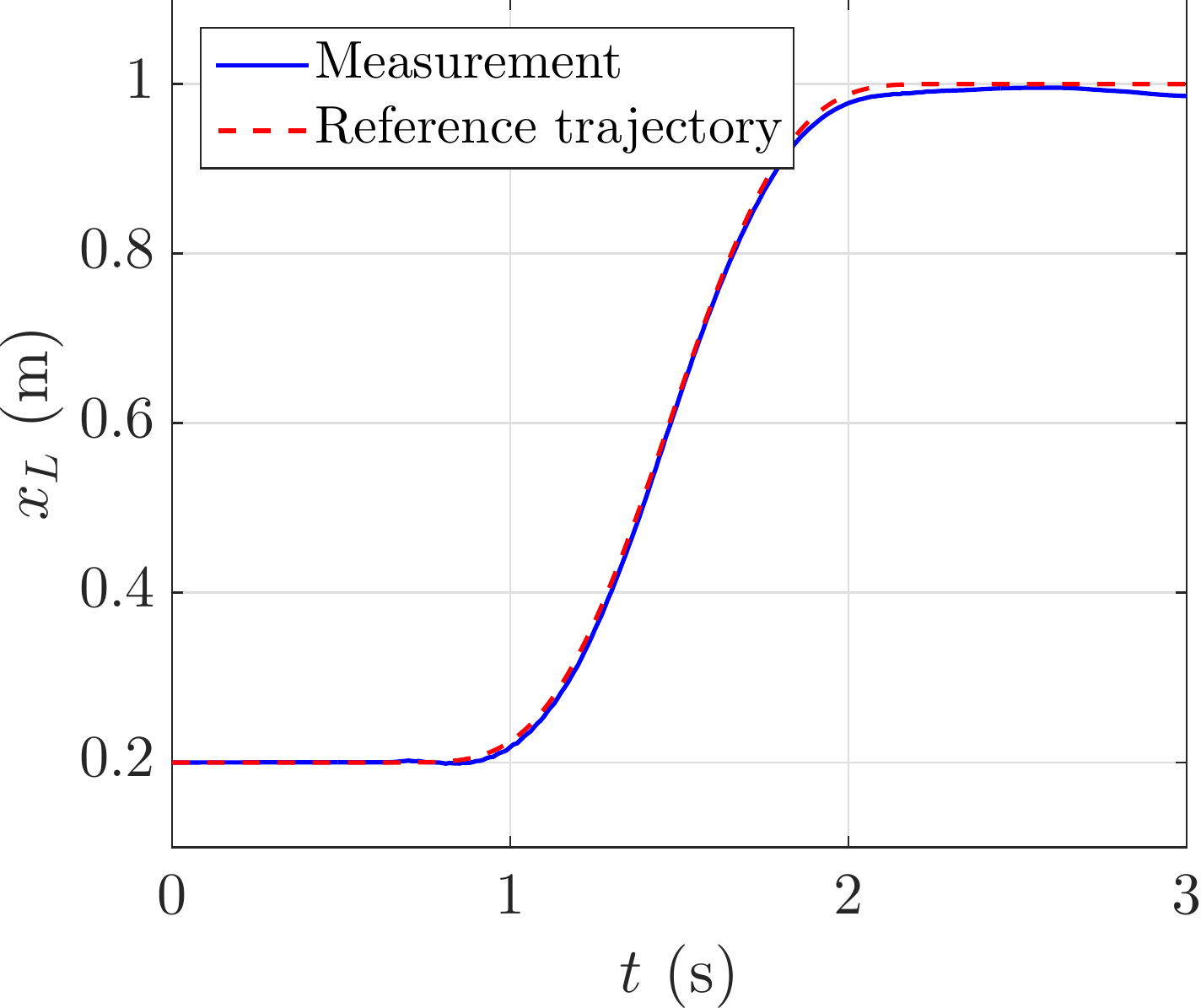}\hspace{0.05\columnwidth}\includegraphics[width=0.47\columnwidth]{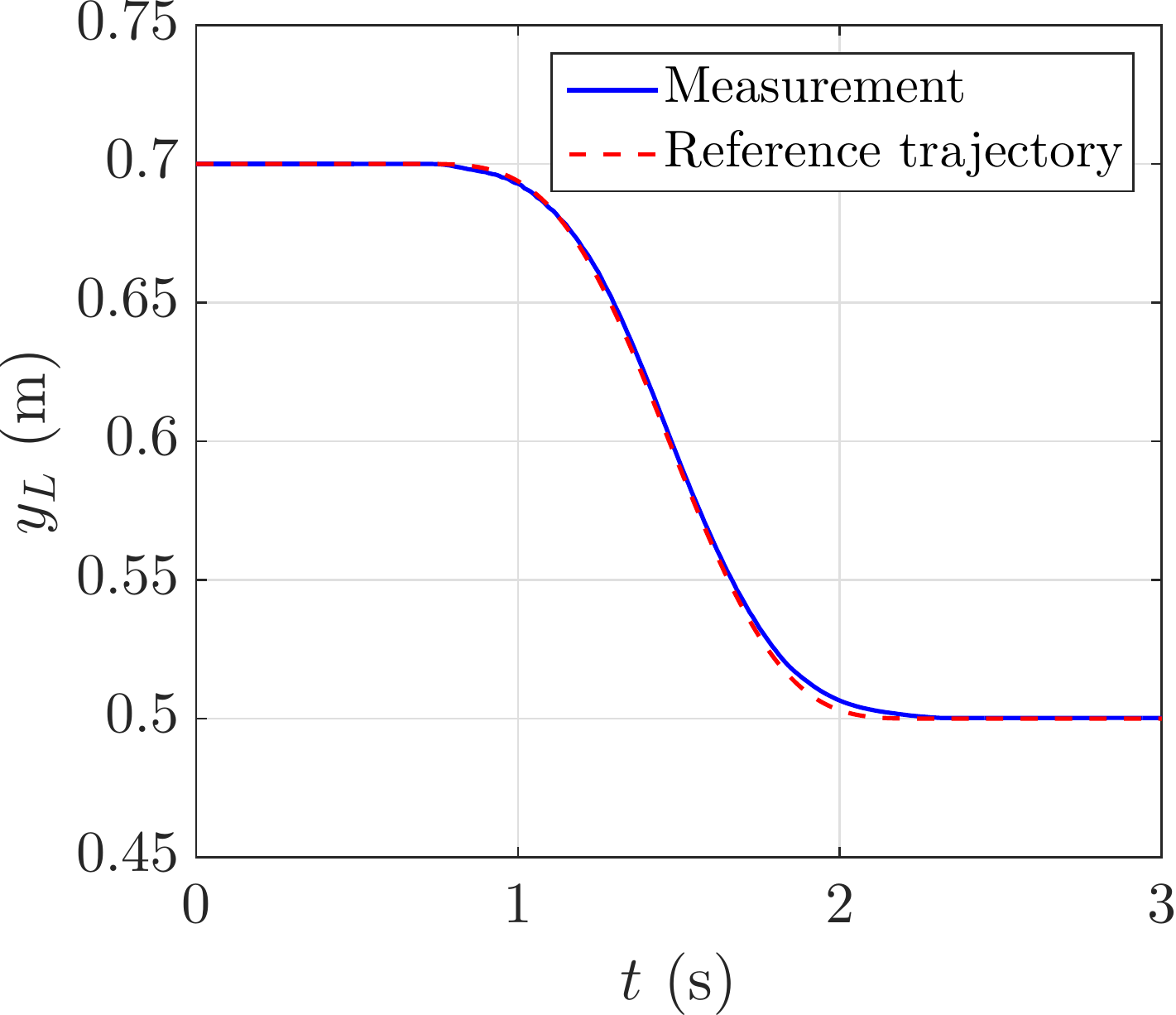}
	
	\caption{\label{fig:Poly10ms_kont_xy}Tracking performance of the continuous-time
		flatness-based controller ($T_{s}=10\,\textrm{ms}$).}
\end{figure}
\begin{figure}
	\centering\includegraphics[width=0.47\columnwidth]{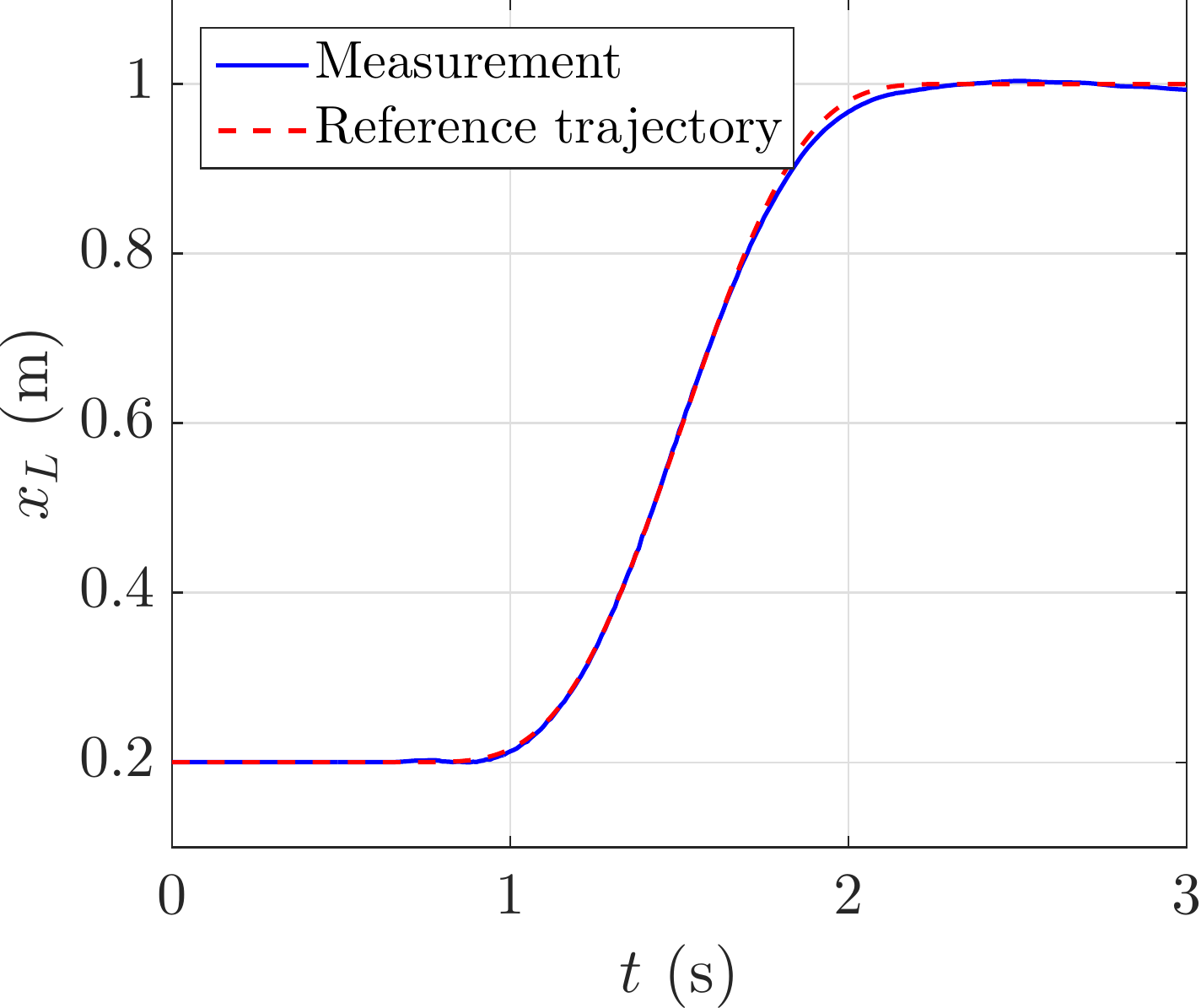}\hspace{0.05\columnwidth}\includegraphics[width=0.47\columnwidth]{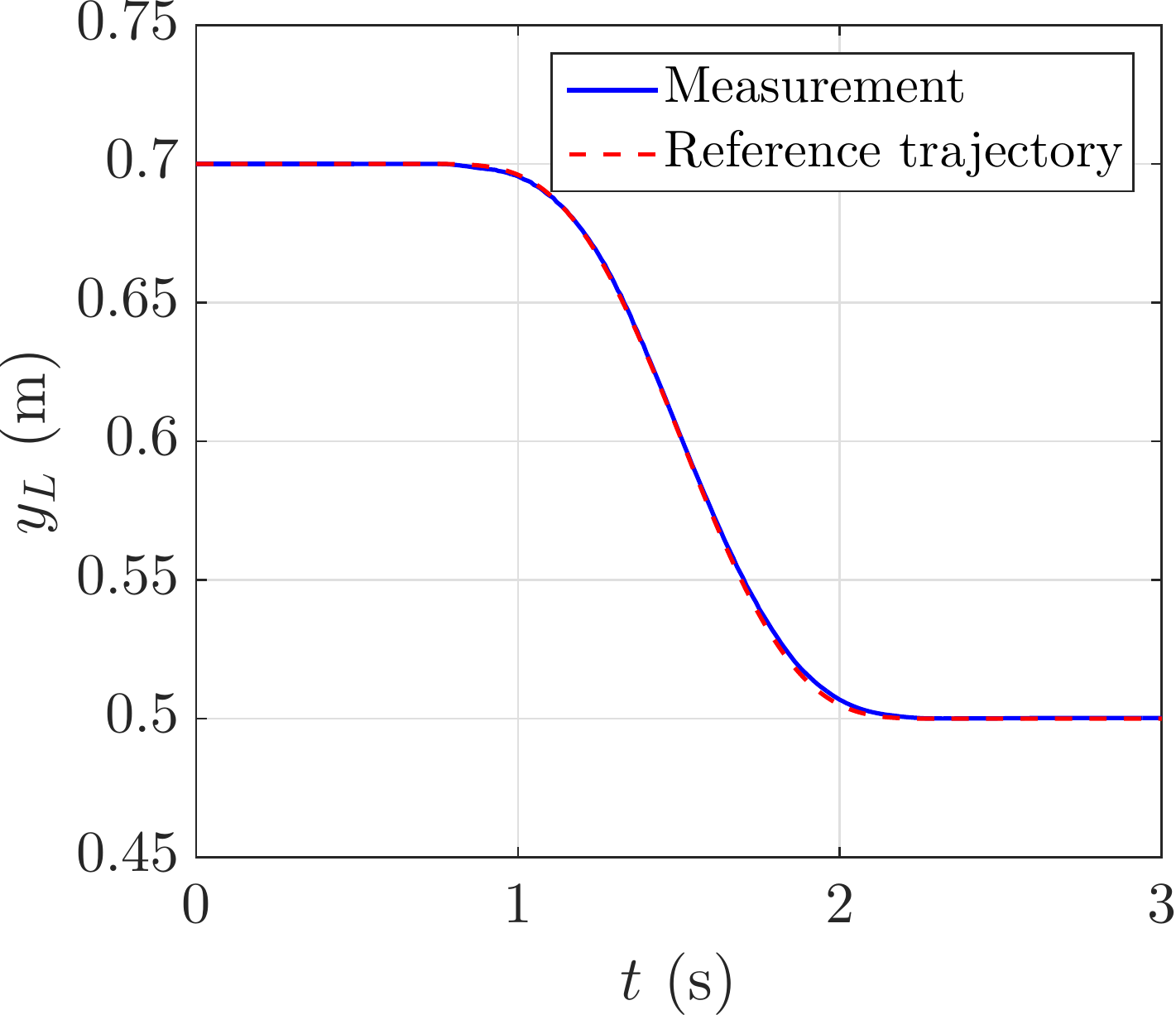}
	
	\caption{\label{fig:Poly10ms_disk_xy}Tracking performance of the discrete-time
		flatness-based controller ($T_{s}=10\,\textrm{ms}$).}
\end{figure}
\begin{figure}
	\centering\includegraphics[width=0.47\columnwidth]{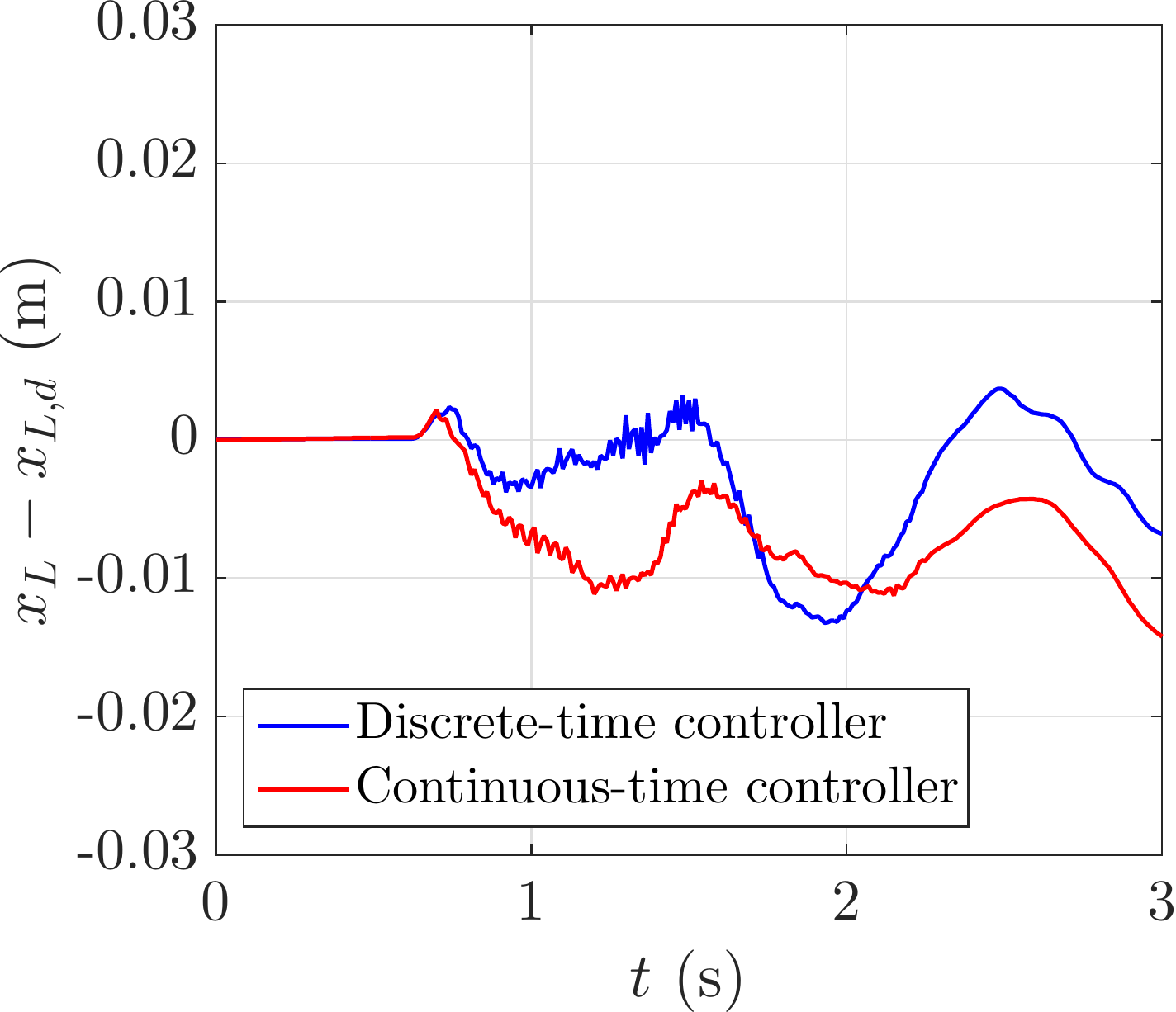}\hspace{0.05\columnwidth}\includegraphics[width=0.47\columnwidth]{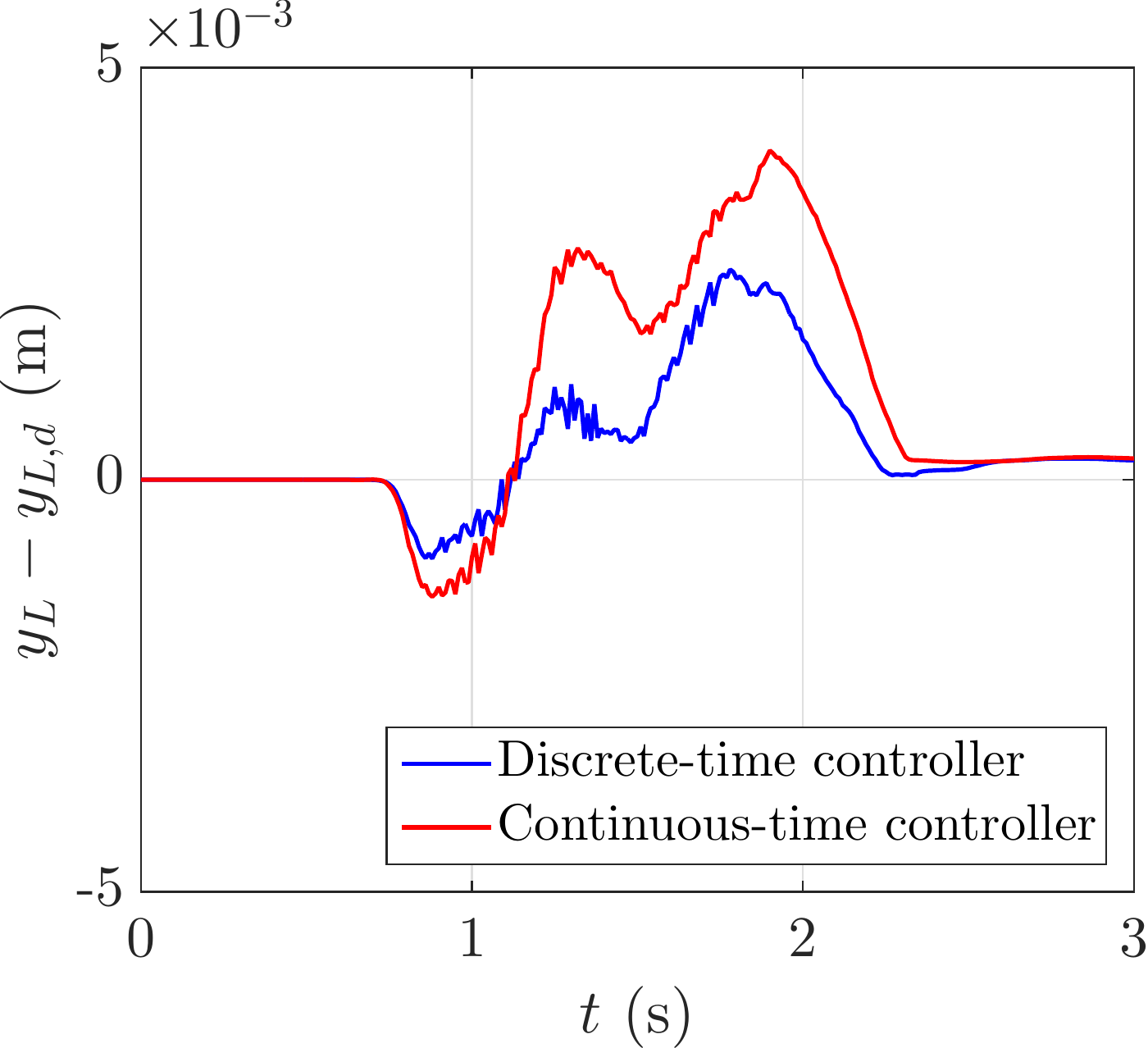}
	
	\caption{\label{fig:Poly10ms_err}Tracking error comparison ($T_{s}=10\,\textrm{ms}$).}
\end{figure}

In the second experiment, the sampling time is increased to $T_{s}=80\,\textrm{ms}$.
It turns out that this sampling time is too large for the continuous-time
controller, whereas the discrete-time controller still transfers the
load between the rest positions as desired. The corresponding measurements
are shown in Fig. \ref{fig:Poly80ms_xy}. However, it can be observed
that the reference trajectory is no longer tracked that well. The
reason is that the Euler-discretization is not exact and the approximation
error increases with the sampling time. Nevertheless, the novel discrete-time
controller seems to be quite robust with respect to high sampling
times (as further experiments up to $T_{s}=100\,\textrm{ms}$ have
shown). In Fig. \ref{fig:Poly80ms_FM}, both the actual control inputs
as well as the pure feedforward part (including the friction compensation)
of the control law are depicted. It can be seen that at the end of
the transition both inputs do not exactly have the expected stationary
values. This is due to the steady-state deviation caused by static
friction and will be discussed in more detail in the next section.
\begin{figure}
	\centering\includegraphics[width=0.47\columnwidth]{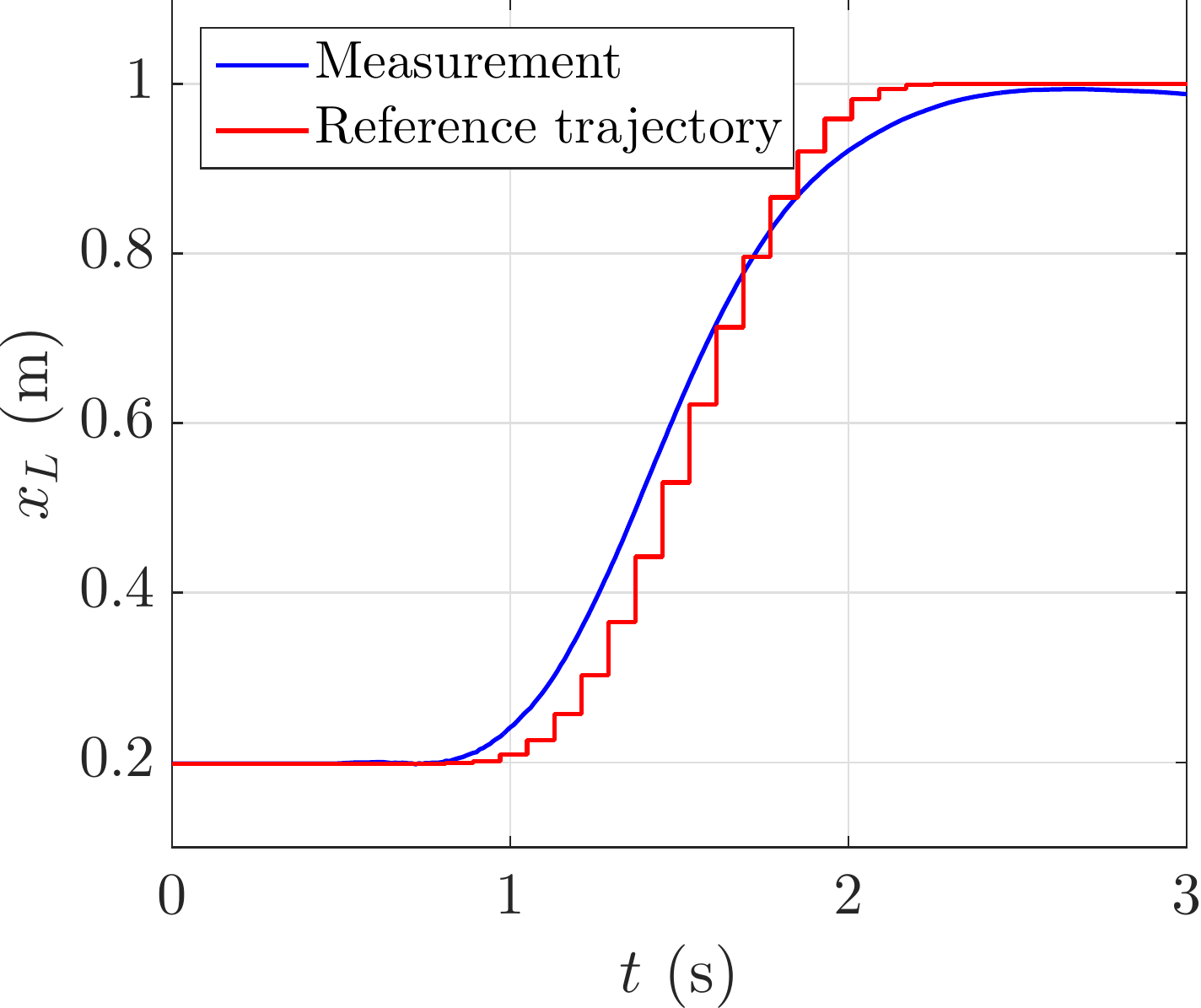}\hspace{0.05\columnwidth}\includegraphics[width=0.47\columnwidth]{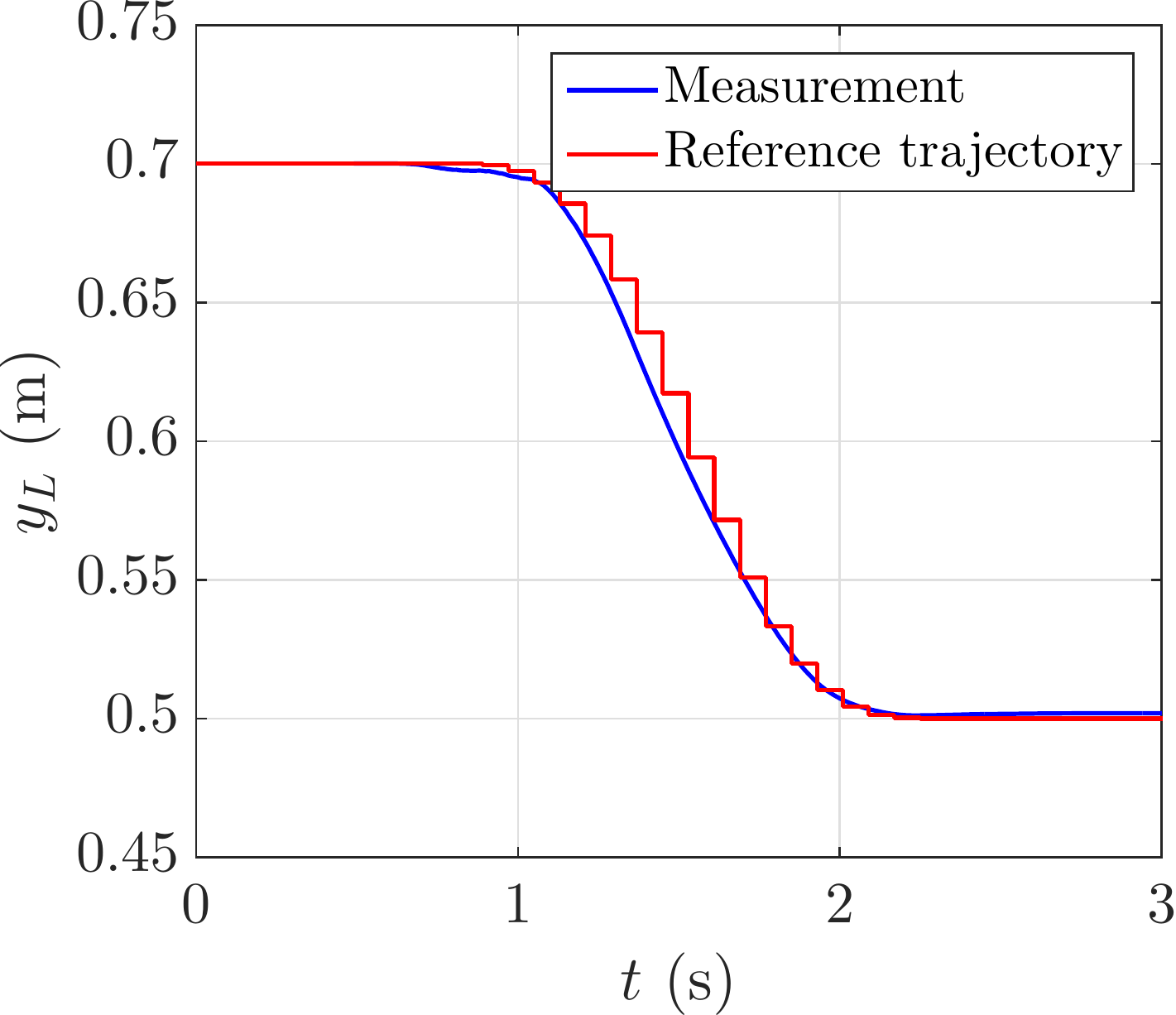}
	
	\caption{\label{fig:Poly80ms_xy}Tracking performance of the discrete-time
		flatness-based controller ($T_{s}=80\,\textrm{ms}$).}
\end{figure}
\begin{figure}
	\centering\includegraphics[width=0.47\columnwidth]{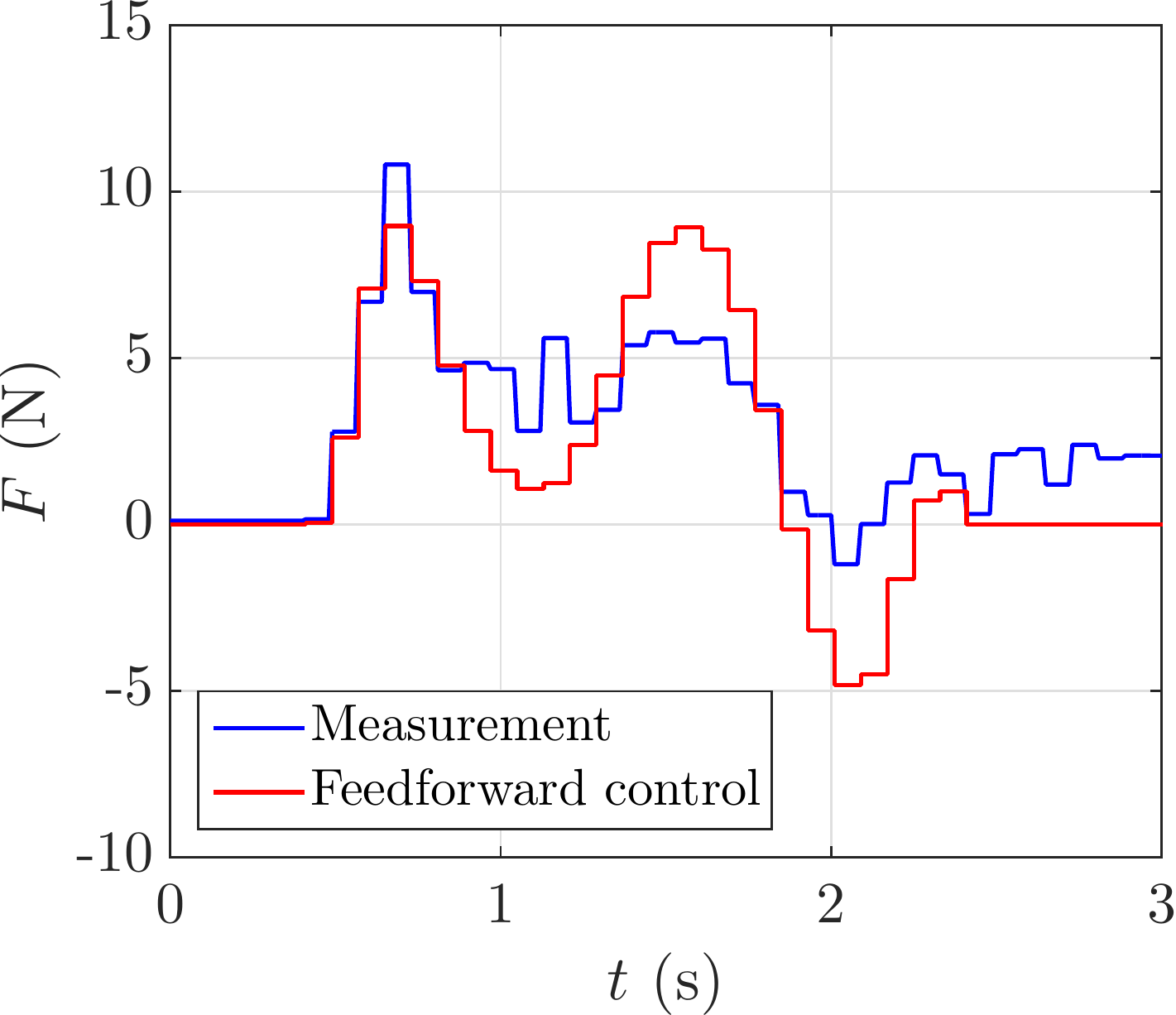}\hspace{0.05\columnwidth}\includegraphics[width=0.47\columnwidth]{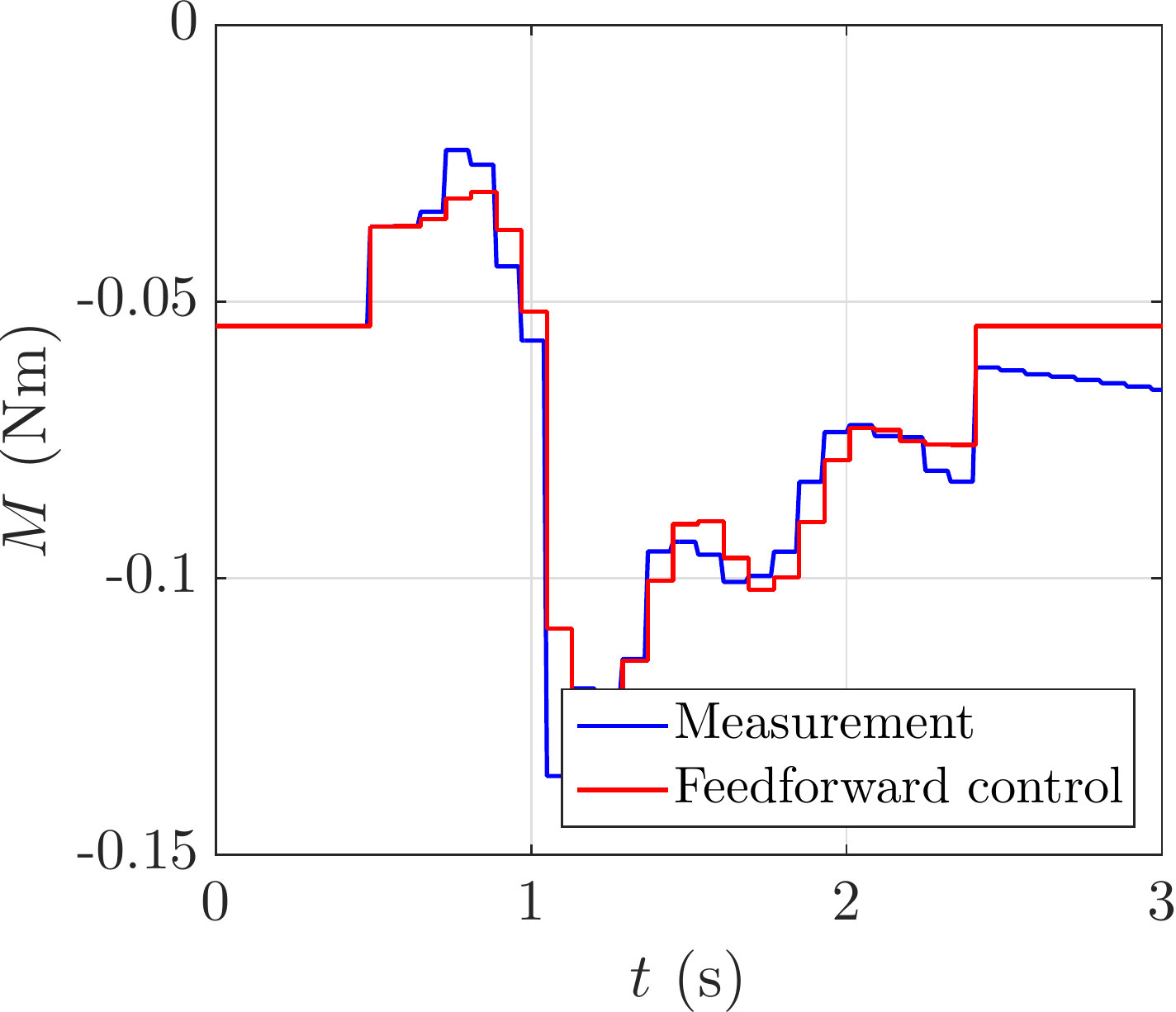}
	
	\caption{\label{fig:Poly80ms_FM}Actual control inputs and feedforward part
		($T_{s}=80\,\textrm{ms}$).}
\end{figure}

\subsection{\label{subsec:Optimization}Design of reference trajectories by optimization}

In the following, we use the flatness of the gantry crane to design
reference trajectories that minimize a given objective function. Instead
of searching for state- and input trajectories, which are constrained
by the system equations, we formulate the optimization problem in
terms of the flat output. As it is well-known, the advantage of the
latter approach is that the trajectories of the flat output are free
and the system equations do not need to be taken into account as constraints.
In the continuous-time case, the possible trajectories of the flat
output belong to an infinite-dimensional space. Thus, in order to
get a finite-dimensional optimization problem, the trajectories are
usually restricted to a set of functions described by a finite number
of parameters, see e.g. \cite{MartinMurrayRouchon:1997}. Possible
approaches are the use of spline curves or the method applied in \cite{KolarRamsSchlacher:2017}.
In the discrete-time case, in contrast, the optimization problem is
inherently finite-dimensional since we are dealing with sequences
instead of functions.

In the following, we formulate and solve such an optimization problem
for the sampled-data model \eqref{eq:sampled_data_model_gantry_crane}
of the gantry crane. The goal is to achieve a transition between the
two rest positions of Section \ref{subsec:Comparison_cont_disc} in
$T=3\,\textrm{s}$ while minimizing the absolute acceleration
\[
a_{L}=\sqrt{f_{a_{L,x}}^{2}+f_{a_{L,y}}^{2}}
\]
of the load. The functions $f_{a_{L,x}}$ and $f_{a_{L,y}}$ are the
same as in \eqref{eq:ganry_new_state}. By the use of the parameterizing
map \eqref{eq:parameterizing_map_gantry}, $a_{L}$ can now be expressed
as a function of the flat output and its forward-shifts. The optimization
problem can be formulated as
\begin{align*}
\begin{aligned}\underset{p}{\textrm{\ensuremath{\mathrm{min}}}} & \,\text{\ensuremath{\varepsilon}}\\
\text{\ensuremath{\mathrm{s.t.}}}\, & a_{L}(k)\leq\ensuremath{\varepsilon}\\
& |F(k)|\leq F_{\textrm{max}}\\
& |M(k)|\leq M_{\textrm{max}}
\end{aligned}
\end{align*}
with $k=0,\dots,N-1$. The inputs have been limited by $F_{\textrm{max}}=10\,\text{N}$
and $M_{\textrm{max}}=0.2\,\textrm{Nm}$.\footnote{On the laboratory setup of the gantry crane the actual input limitations
	are higher. We use the above limits for the optimization to ensure
	that there is enough margin for the additional friction compensation
	and the feedback part of the control law.} Since the load shall be transferred from one rest position to another,
the reference trajectory needs to satisfy the conditions \eqref{eq:disc_ref_traj_initial_final_cond}
and the vector of optimization variables is reduced to $p=[\ensuremath{\varepsilon},y_{d}(4),\dots,y_{d}(N-1)]$.
In Fig. \ref{fig:Opt_XY_10ms}, the calculated optimal trajectory
for a sampling time of $T_{s}=10\,\textrm{ms}$ as well as the polynomial
trajectory, which was used as initial value for the optimization,
are shown. The absolute value $a_{L}$ of the acceleration of the
load is shown in Fig. \ref{fig:Opt_Acc_10ms} for both the polynomial
as well as the optimized trajectory. It can be observed that for the
optimal trajectory the acceleration is constant during almost the
whole transition. The corresponding input trajectory $F(k)$ temporarily
reaches the limits $\pm F_{\text{max}}$ at the beginning, in the
middle, and at the end of the transition, while the constraint $|M(k)|\leq M_{\textrm{max}}$
is never active.

Figures \ref{fig:Opt_xy_10ms} to \ref{fig:Op_FM_10ms} show measurement
results for the determined optimal trajectory. In Fig. \ref{fig:Opt_xy_10ms},
it can be observed that the reference trajectories are again almost
perfectly tracked. Particularly interesting are the horizontal and
vertical velocities of the load depicted in Fig. \ref{fig:Opt_vxvy_10ms},
which increase and decrease for most of the time with a constant slope.
This is in accordance with the almost constant acceleration of Fig.
\ref{fig:Opt_Acc_10ms}. The corresponding control inputs are shown
in Fig. \ref{fig:Op_FM_10ms}. It can be seen that because of the
additional friction compensation already the feedforward part slightly
exceeds the limit $|F(k)|\leq F_{\textrm{max}}$. However, the resulting
control input, including the feedback part, remains within the input
limitations of the laboratory setup. At the end of the transition
at $t=4\,\textrm{s}$, it can again be observed that the inputs $F$
and $M$ do not exactly have the expected stationary values because
the end position of the load slightly differs from its desired position.
Thus, the controller attempts to compensate this error. Since the
resulting force $F$ and torque $M$ are too small to overcome the
static friction of the trolley and the rope drum, these input values
would be applied permanently. For this experiment, however, we have
used the extended control law with integral parts, which are activated
after the transition of the reference trajectory is completed. Consequently,
as can be observed in Fig. \ref{fig:Op_FM_10ms}, the applied inputs
increase after $t=4\,\textrm{s}$ until they are large enough to overcome
the static friction.
\begin{figure}
	\centering\includegraphics[width=0.47\columnwidth]{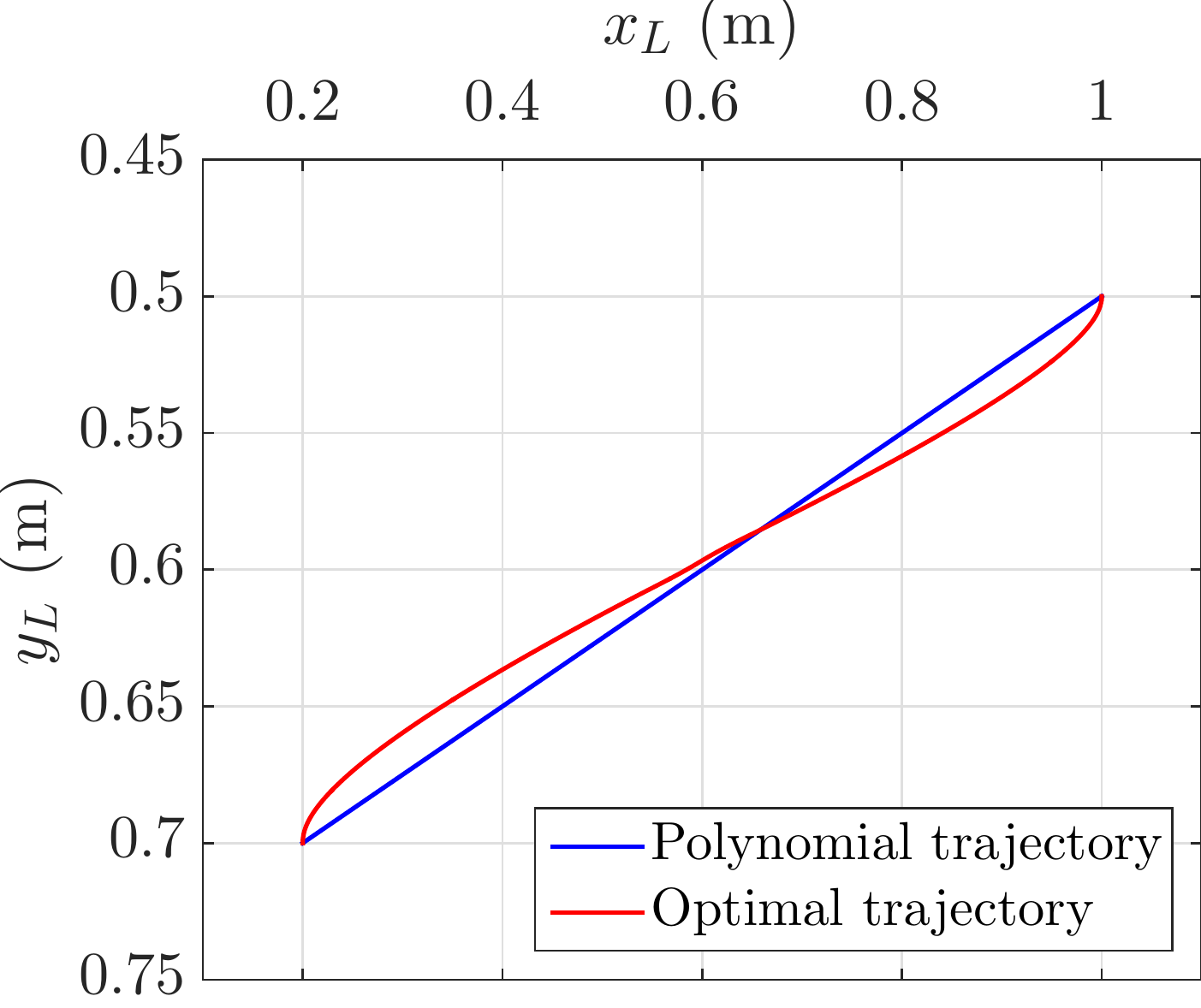}
	
	\caption{\label{fig:Opt_XY_10ms}Optimal trajectory ($T_{s}=10\,\textrm{ms}$).}
\end{figure}
\begin{figure}
	\centering\includegraphics[width=0.47\columnwidth]{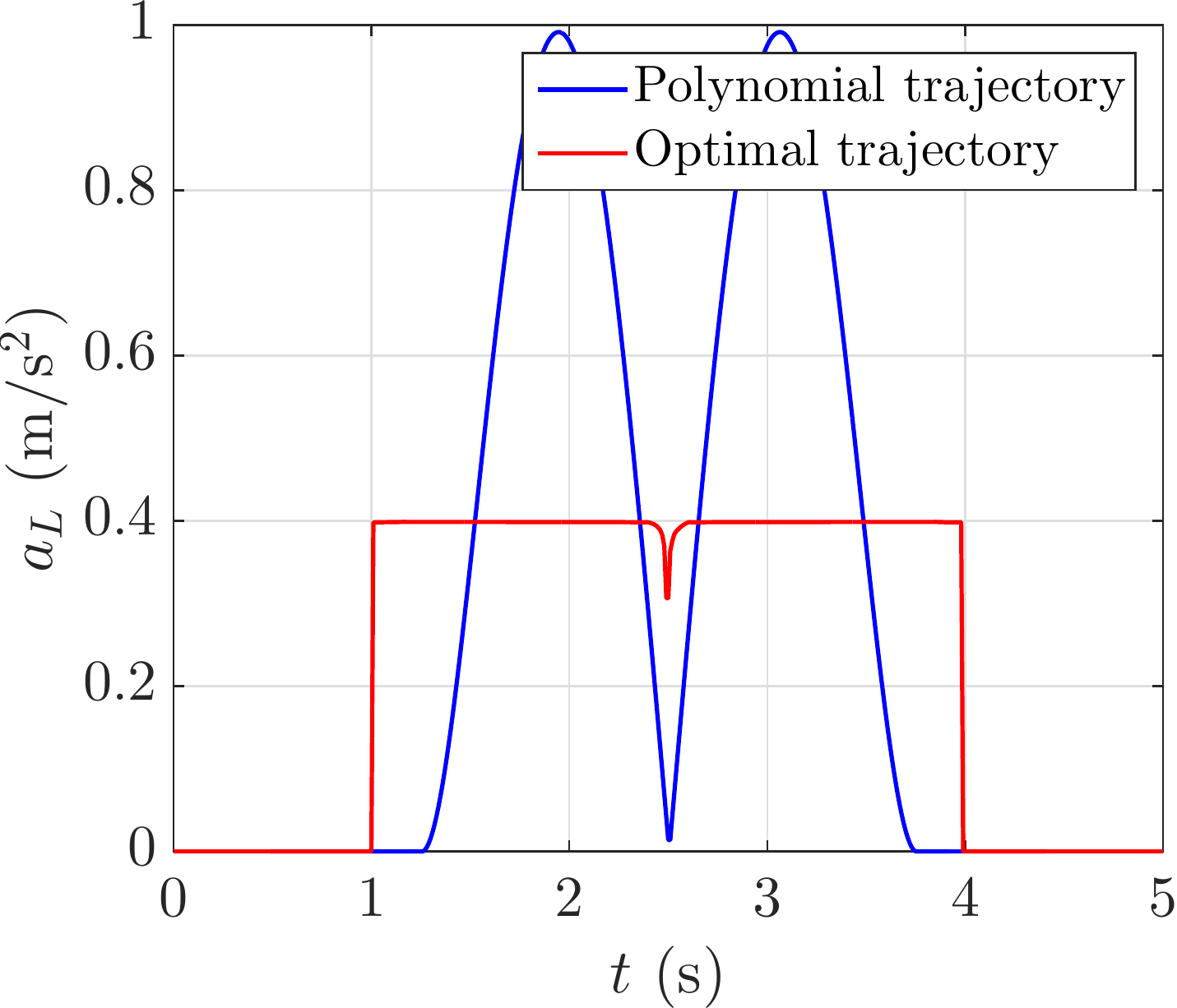}
	
	\caption{\label{fig:Opt_Acc_10ms}Acceleration for the optimal trajectory ($T_{s}=10\,\textrm{ms}$).}
\end{figure}
\begin{figure}
	\centering\includegraphics[width=0.47\columnwidth]{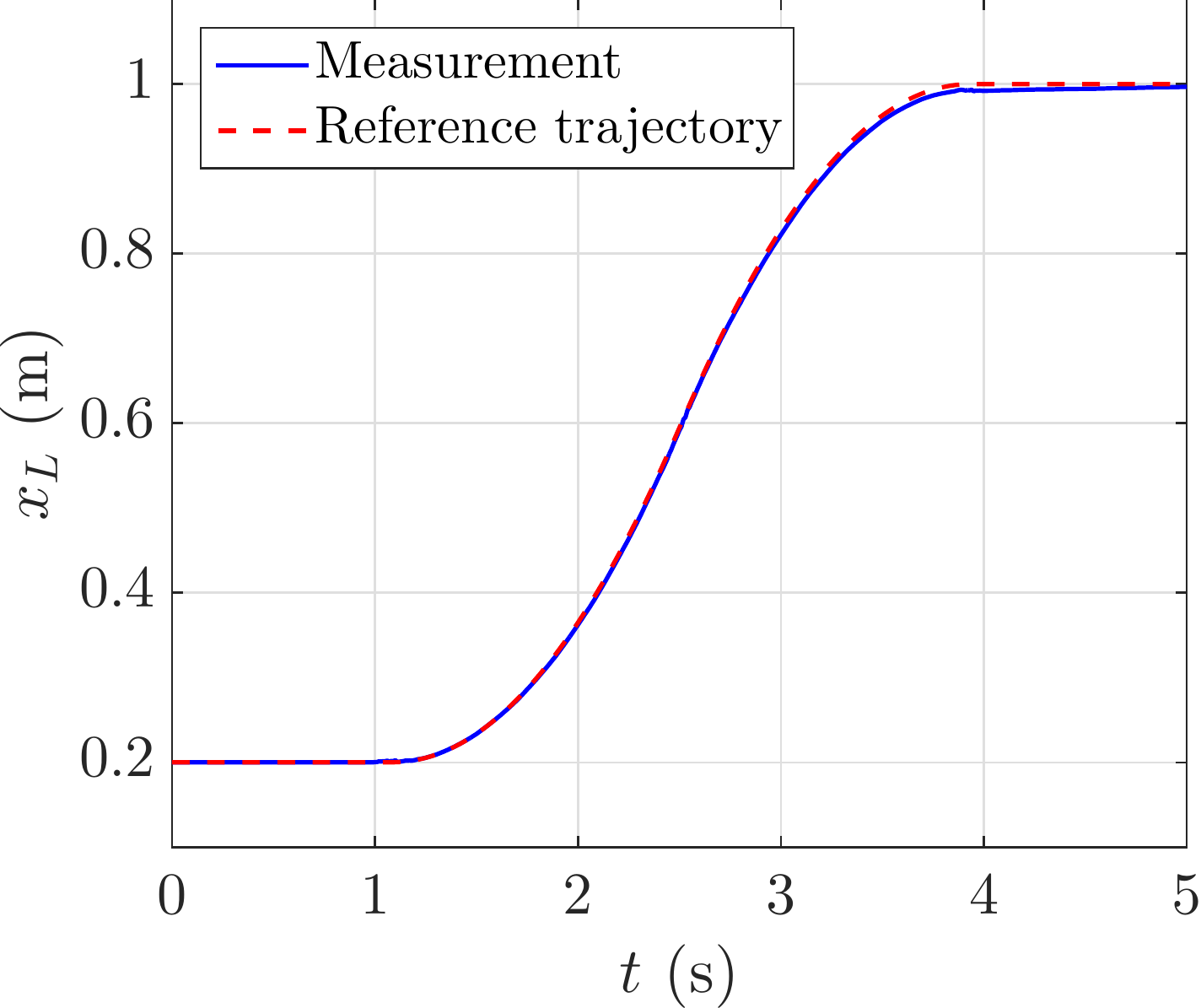}\hspace{0.05\columnwidth}\includegraphics[width=0.47\columnwidth]{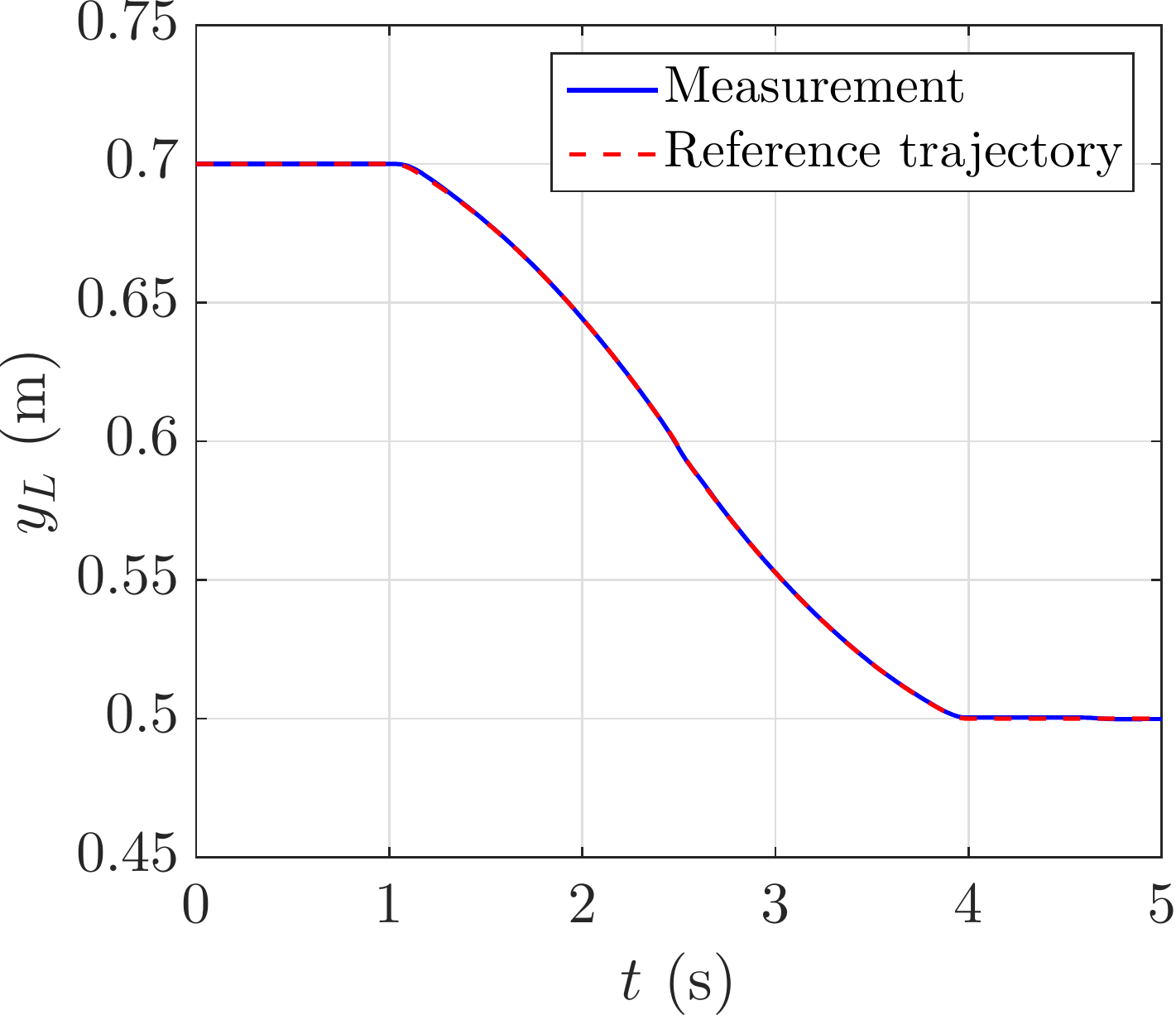}
	
	\caption{\label{fig:Opt_xy_10ms}Position tracking performance for the optimal
		trajectory ($T_{s}=10\,\textrm{ms}$).}
\end{figure}
\begin{figure}
	\centering\includegraphics[width=0.47\columnwidth]{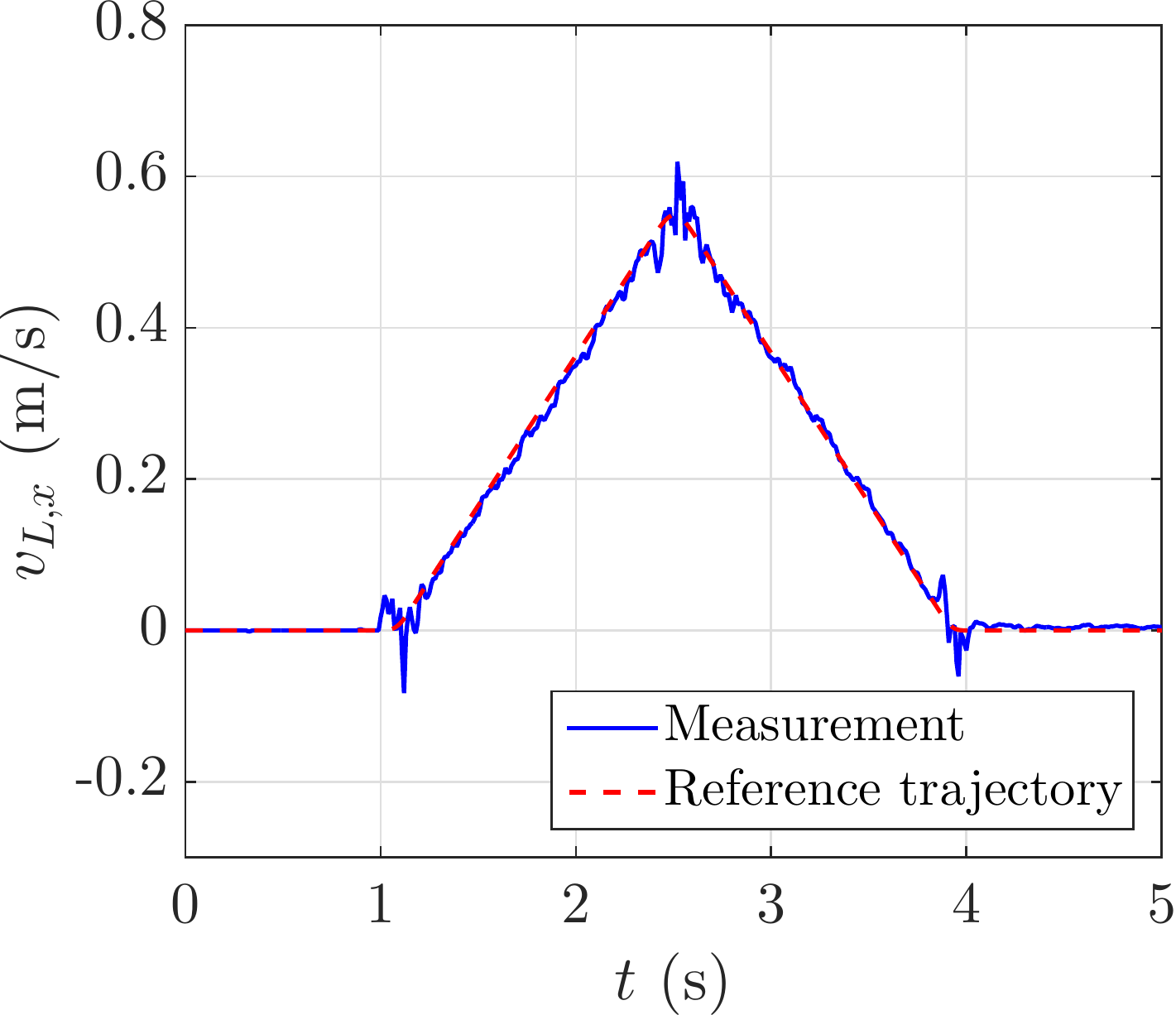}\hspace{0.05\columnwidth}\includegraphics[width=0.47\columnwidth]{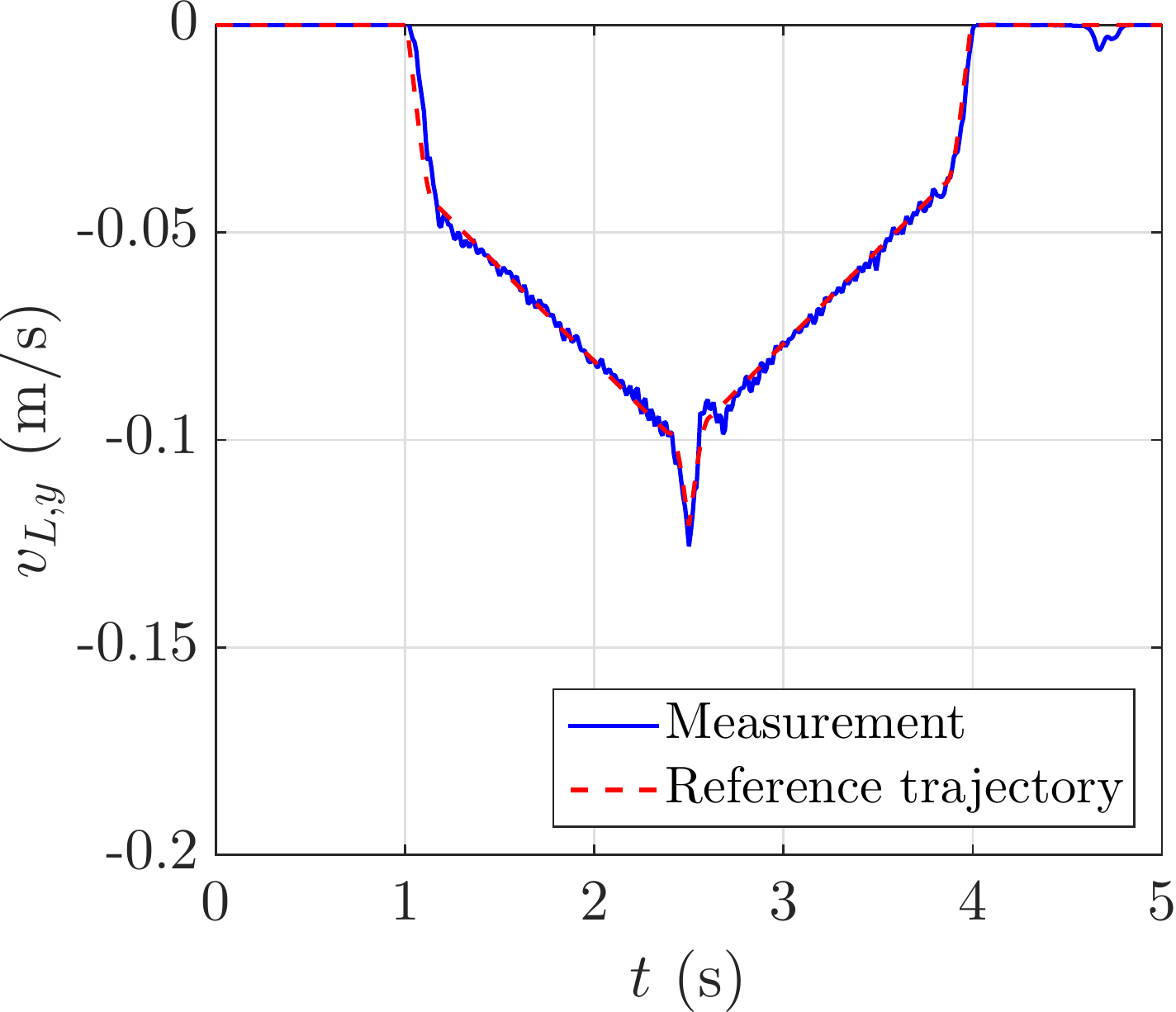}
	
	\caption{\label{fig:Opt_vxvy_10ms}Velocity tracking performance for the optimal
		trajectory ($T_{s}=10\,\textrm{ms}$).}
\end{figure}
\begin{figure}
	\centering\includegraphics[width=0.47\columnwidth]{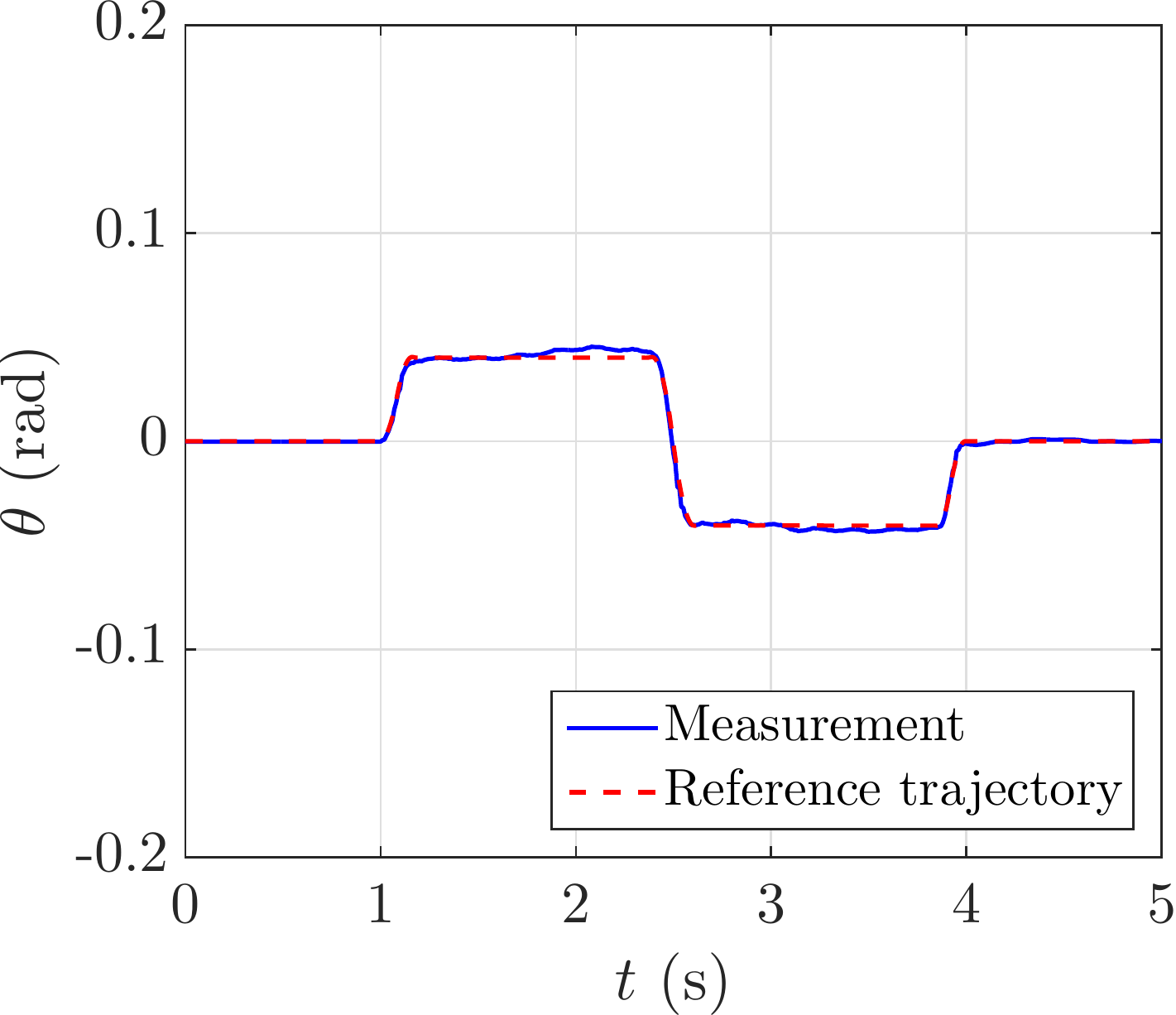}\hspace{0.05\columnwidth}\includegraphics[width=0.47\columnwidth]{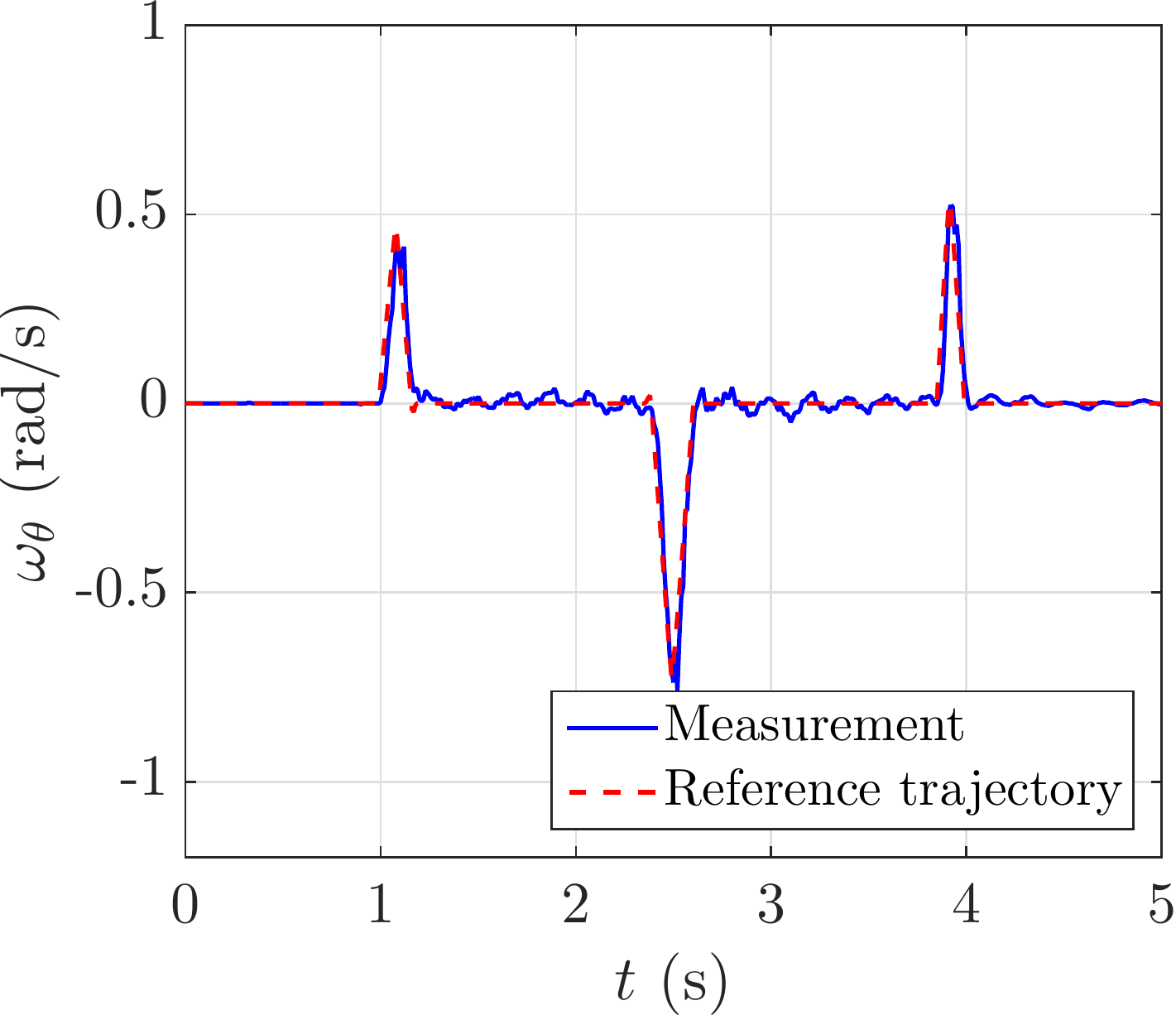}
	
	\caption{\label{fig:Opt_theta_omega_theta_10ms}Angle and angular velocity
		tracking performance for the optimal trajectory ($T_{s}=10\,\textrm{ms}$).}
\end{figure}
\begin{figure}
	\centering\includegraphics[width=0.47\columnwidth]{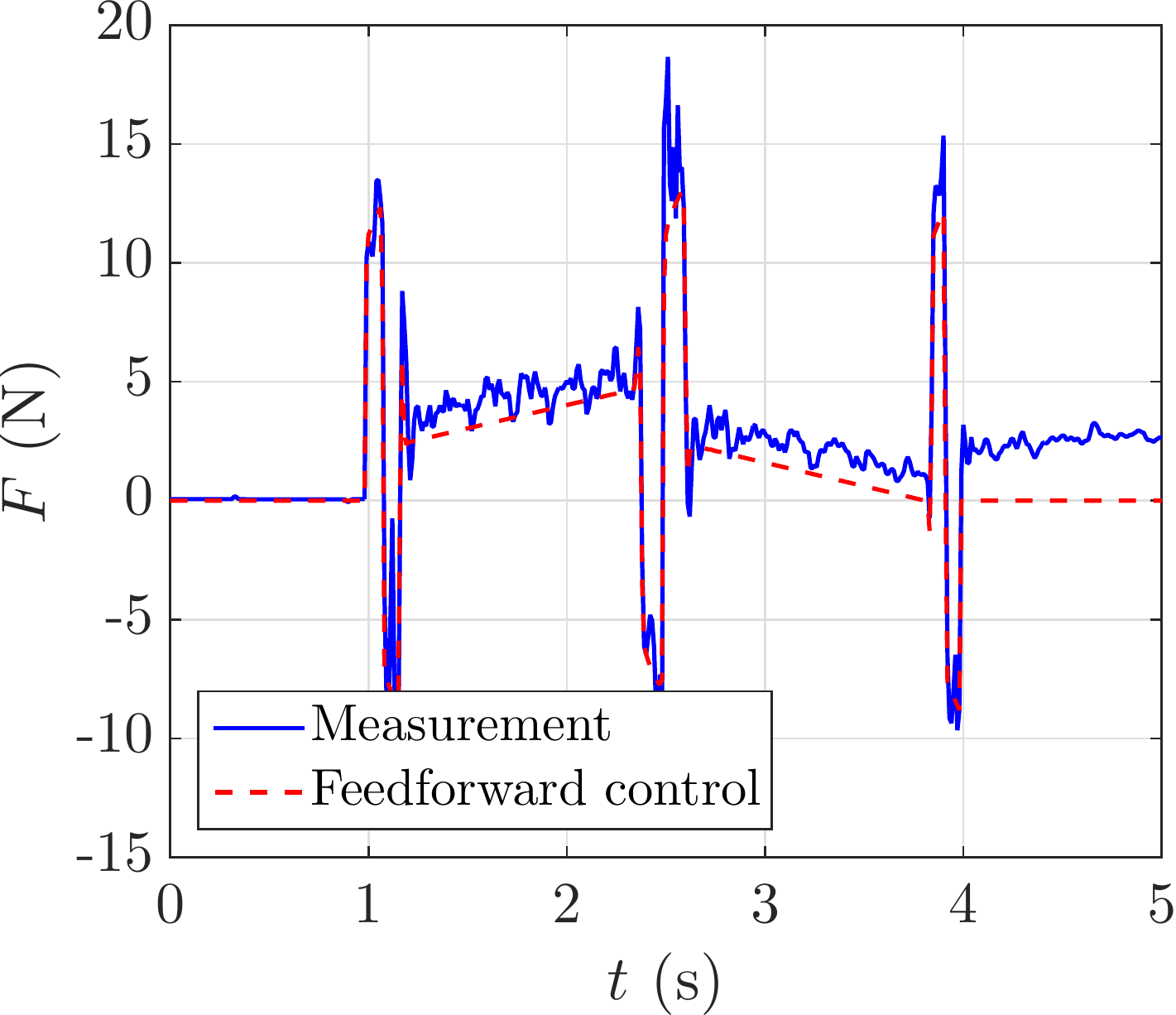}\hspace{0.05\columnwidth}\includegraphics[width=0.47\columnwidth]{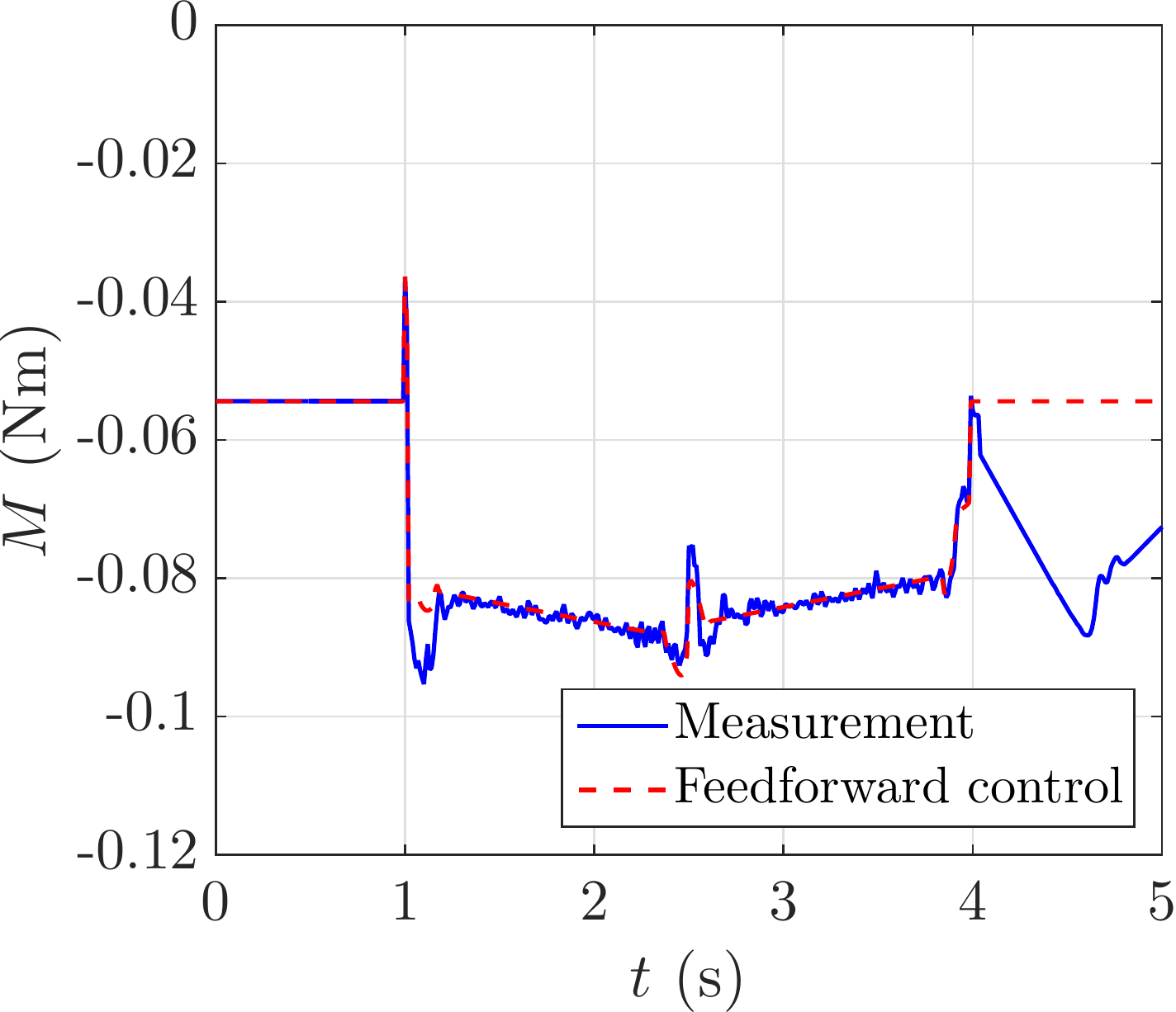}
	
	\caption{\label{fig:Op_FM_10ms}Actual control inputs and feedforward part
		for the optimal trajectory ($T_{s}=10\,\textrm{ms}$).}
\end{figure}

\subsection{\label{subsec:Lying_eight}Tracking a closed path}

As already mentioned in Section \ref{subsec:Comparison_cont_disc},
in the continuous-time case the reference trajectory $y_{d}(t)$ needs
to be sufficiently often differentiable. Moreover, in order to track
a reference trajectory $y_{d}(t)$, all time derivatives up to $y_{d}^{(r)}(t)$
must be computed beforehand. In the discrete-time case, in contrast,
there are no restrictions on the reference trajectory $y_{d}(k)$
such as differentiability, and instead of time derivatives the control
law involves only forward-shifts of $y_{d}(k)$. This simplifies the
implementation, since these forward-shifts can be constructed easily
by chains of unit delays.

To show the flexibility of the discrete-time approach, in the last
experiment a closed path with the shape of a lying eight is used as
reference trajectory. In Fig. \ref{fig:Eight_10ms_xy}, it can be
seen that $y_{d}(k)$ is tracked well and that the stationary errors
at the end of the reference trajectory are compensated by the integral
parts. Fig. \ref{fig:Eight_10ms_XY} shows the desired and the measured
trajectory in the $xy$-plane. The small deviations at the reversal
points of the load are caused by the strong friction of the trolley
on the rail. 
\begin{figure}
	\centering\includegraphics[width=0.47\columnwidth]{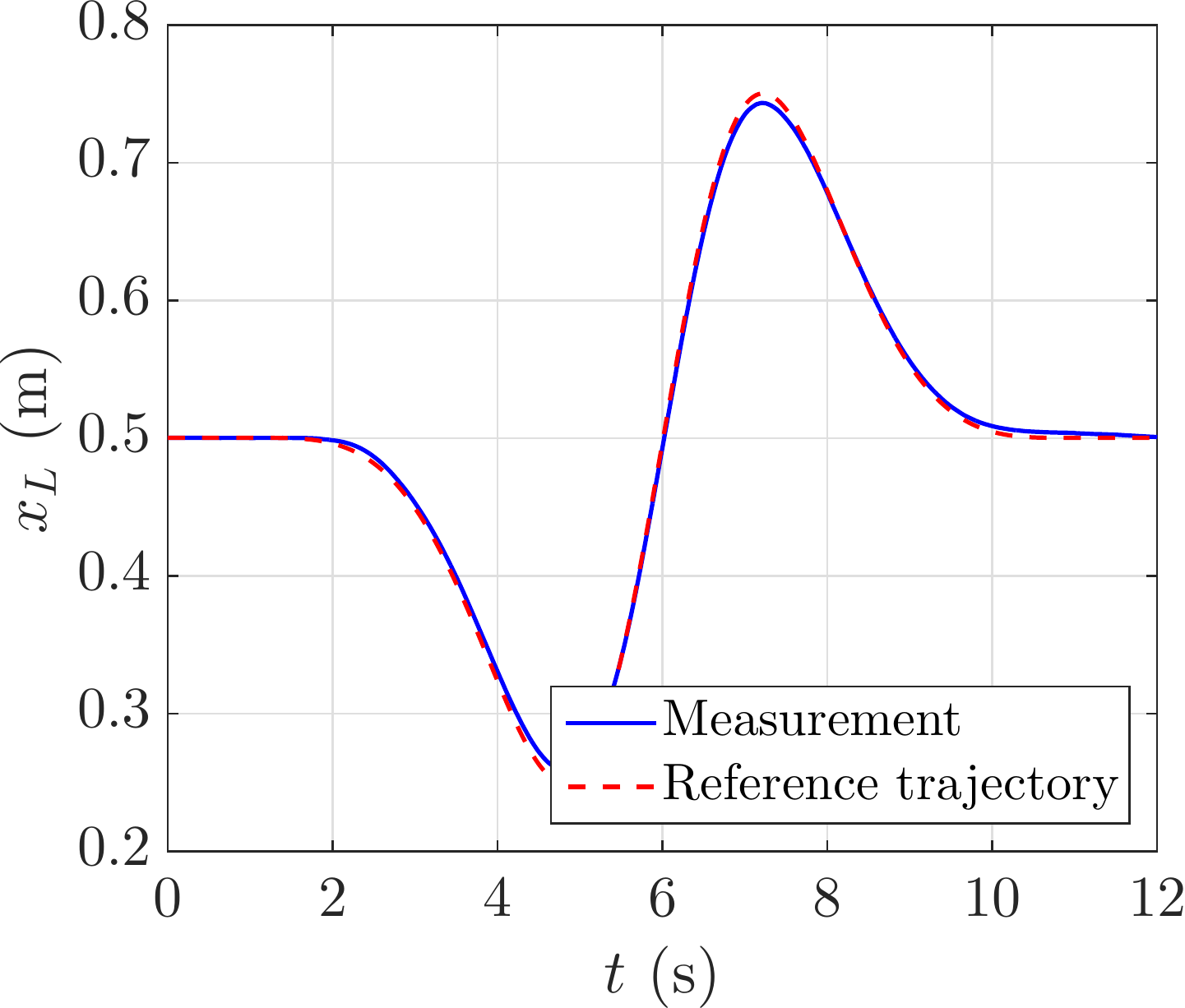}\hspace{0.05\columnwidth}\includegraphics[width=0.47\columnwidth]{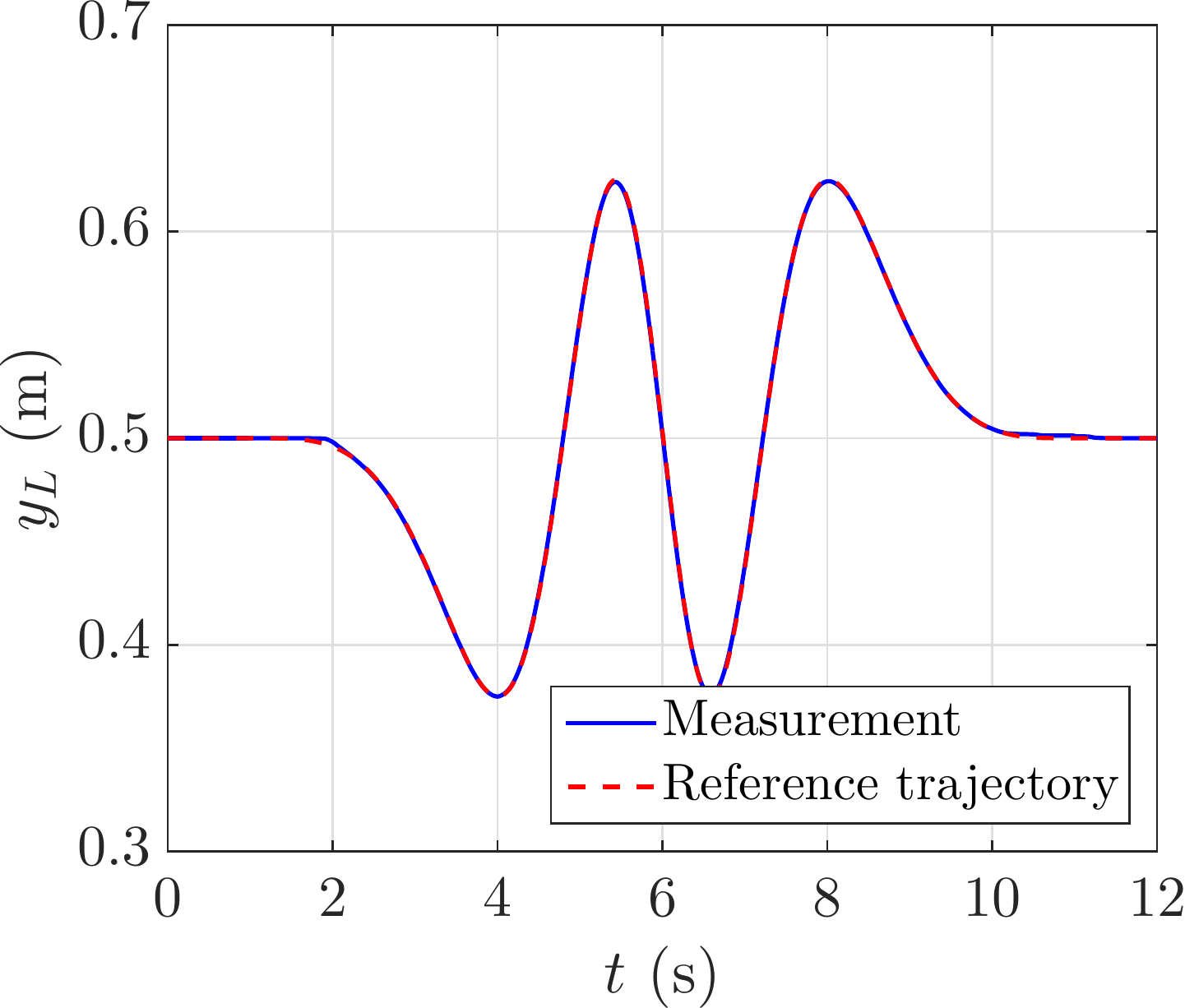}
	
	\caption{\label{fig:Eight_10ms_xy}Position tracking performance for a closed
		path ($T_{s}=10\,\textrm{ms}$).}
\end{figure}
\begin{figure}
	\centering\includegraphics[width=0.47\columnwidth]{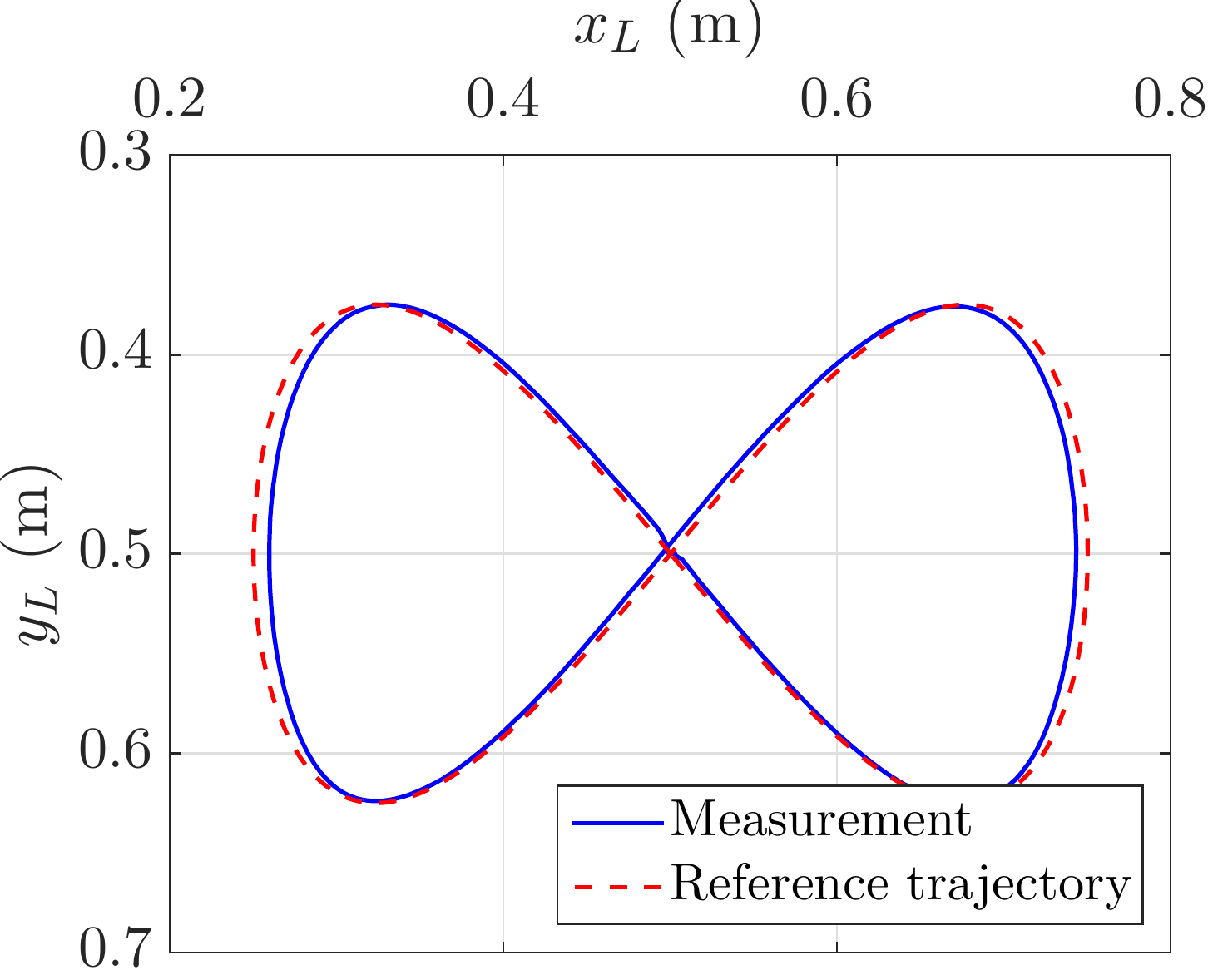}
	
	\caption{\label{fig:Eight_10ms_XY}Tracking a closed path ($T_{s}=10\,\textrm{ms}$).}
\end{figure}

\section{Conclusion}

We have shown by means of the laboratory model of a gantry crane that
the concept of discrete-time flatness is well-suited for the design
of tracking controllers in practical applications. For nonlinear systems,
the difficulty lies in determining a flat sampled-data model of the
plant. In order to overcome this problem, we have proposed a constructive
method which relies on a combination of a suitable state transformation
and a subsequent Euler-discretization. Although the Euler-discretization
is not exact, experiments with the gantry crane have shown that the
novel discrete-time flatness-based controller can be used over a significantly
wider range of sampling rates than its continuous-time counterpart.
Moreover, we have shown that the property of discrete-time flatness
is also beneficial in the context of trajectory planning by optimization.

An important topic for future research are alternative, more precise
discretization methods which also preserve the flatness. The goal
is to achieve a better tracking performance for even lower sampling
rates. In contrast to the proposed approach via the Euler-discretization,
such methods may possibly lead to discrete-time systems which are
flat in the more general sense of \cite{DiwoldKolarSchoeberl2022}
and not necessarily forward-flat.

\bibliography{Bibliography_Johannes_April_2020}

\end{document}